\algrenewcommand\algorithmicdo{}
\newcommand{\mr}[1]{\mathrm{#1}}
\newtheorem{prop}{Proposition}
\newtheorem{lemma}{Lemma}
\begin{document}

\title{Almost optimal measurement scheduling of molecular Hamiltonian \\ via finite projective plane}

\author{Wataru Inoue}
\affiliation{School of Engineering Science, Osaka University, 1-3 Machikaneyama, Toyonaka, Osaka 560-8531, Japan}
\affiliation{Center for Quantum Information and Quantum Biology, Osaka University, Japan}

\author{Koki Aoyama}
\affiliation{Graduate School of Engineering Science, Osaka University, 1-3 Machikaneyama, Toyonaka, Osaka 560-8531, Japan}
\affiliation{Graduate School of Information Science and Technology, Osaka University, 1-5 Yamada-oka, Suita, Osaka 565-0871, Japan}

\author{Yusuke Teranishi}
\affiliation{Graduate School of Information Science and Technology, Osaka University, 1-5 Yamada-oka, Suita, Osaka 565-0871, Japan}
\affiliation{Center for Quantum Information and Quantum Biology, Osaka University, Japan}

\author{Keita Kanno}
\author{Yuya O. Nakagawa}
\email{nakagawa@qunasys.com}
\affiliation{QunaSys Inc., Aqua Hakusan Building 9F, 1-13-7 Hakusan, Bunkyo, Tokyo 113-0001, Japan}

\author{Kosuke Mitarai}
\email{mitarai.kosuke.es@osaka-u.ac.jp}
\affiliation{Graduate School of Engineering Science, Osaka University, 1-3 Machikaneyama, Toyonaka, Osaka 560-8531, Japan}
\affiliation{Center for Quantum Information and Quantum Biology, Osaka University, Japan}

\date{\today}

\begin{abstract}
We propose an efficient and almost optimal scheme for measuring molecular Hamiltonians in quantum chemistry on quantum computers, which requires $2N^2$ distinct measurements in the leading order with $N$ being the number of molecular orbitals.
It achieves the state-of-the-art by improving a previous proposal by Bonet-Monroig \textit{et al.} [Phys. Rev. X \textbf{10}, 031064 (2020)] which exhibits $\frac{10}{3}N^2$ scaling in the leading order.
We develop a novel method based on a finite projective plane to construct sets of simultaneously-measurable operators contained in molecular Hamiltonians.
Each measurement only requires a depth-$O(N)$ circuit consisting of $O(N^2)$ one- and two-qubit gates under the Jordan-Wigner and parity mapping, assuming the linear connectivity of qubits on quantum hardwares.
We perform numerical simulation of our method for molecular Hamiltonians of hydrogen chains.
We count the number of sets of simultaneously-measurable operators generated by our method and estimate the number of measurement shots required to achieve the small standard deviation of the energy expectation value for precise quantum chemistry calculations.
Because evaluating expectation values of molecular Hamiltonians is one of the major bottlenecks in the applications of quantum devices to quantum chemistry, our finding is expected to accelerate such applications.
\end{abstract}

\maketitle

\section{Introduction}
One of the most promising, practical, and industry-relevant applications of quantum computers is quantum chemistry calculation~\cite{mcardle2018quantum, Cao2018}.
Measuring expectation values of operators (relevant to a physical or chemical system of interest) is an important and often indispensable subroutine in such application.
For example, the variational quantum eigensolver (VQE)~\cite{peruzzo2014variational, TILLY20221}, which has been intensively studied for utilizing noisy quantum computers called the noisy intermediate-scale quantum devices (NISQ devices)~\cite{Preskill2018}, is a method to prepare an approximate ground state of a given Hamiltonian by iteratively minimizing the expectation value of the Hamiltonian for a trial quantum state.
In VQE, measuring the expectation value of the Hamiltonian is essential and consists of the most important part of the algorithm.
Even for more advanced algorithms in the quantum chemistry applications of quantum computers, such as quantum phase estimation~\cite{kitaev1995, Cleve1998}, measuring expectation values of operators is important.
It is because the phase estimation algorithm allows us to sample eigenvalues and eigenstates of the Hamiltonian but it does not provide properties of the eigenstates.
For example, expectation values of operators with respect to the ground state are often required to investigate properties of simulated molecules or materials, such as the force acted on molecule~\cite{obrien2021efficient, Huggins2022} or transition amplitude with respect to some perturbation \cite{Ibe2022, 2210.00718v2}.
Considering that the operators arising in these applications share the same form with the Hamiltonian, we believe that a strategy for measuring the expectation value of the Hamiltonian will remain to be an essential tool even in the future when the quantum phase estimation is executable on fault-tolerant quantum computers.

The number of copies of quantum states needed to estimate  the expectation value of the Hamiltonian with sufficient accuracy becomes very large even for relatively small systems, due to the need to estimate expectation values of $O(N^4)$ distinct fermionic operators contained in the Hamiltonian for a system with $2N$ fermions (see Eq.~\eqref{eq: original mol Ham}).
Even for small systems corresponding to $\sim 20$ qubits, the required number of copies of quantum states can be over $10^9$ using naive strategies \cite{PhysRevResearch.4.033173}.
Gonthier \textit{et al.}~\cite{gonthier2020identifying} recently pointed out that several days may be needed to evaluate the expectation value of the Hamiltonian just one time to analyze the combustion energies of small organic molecules with sufficient accuracy, assuming the current speed of NISQ devices.
Improvement for the fast evaluation of the expectation values is therefore highly demanded for practical applications of quantum computers to quantum chemistry.
We note that the expectation value of the Hamiltonian can be evaluated by the fermionic one-particle and two-particle reduced density matrices (fermionic 1,2-RDMs, Eq.~\eqref{eq: def. of RDM}) and that such fermionic RDMs are utilized in several algorithms of quantum computational chemistry such as orbital-optimization~\cite{mizukami2020, Takeshita2020, Sokolov2020JCP}.

There have been various studies to tackle the problem of a large number of measurements for evaluating the expectation value of the Hamiltonian~\cite{mcclean2016theory,kandala2017,jena2019pauli,izmaylov2020,verteletskyi2020,yen2020,gokhale2020ON3,hamamura2020,zhao2020,crawford2021efficient, huggins2021efficient,yen2021cartan, BBOalgo, Huggins2022,yen2023deterministic,choi2022improving,Choi2023fluidfermionic}.
Especially, efforts have been put to reduce the number of the sets of the simultaneously-measurable operators in the Hamiltonian.
One can use commutativity or anti-commutativity of the Pauli operators in the qubit representation of the Hamiltonian to group those operators into simultaneously-measurable sets.
Bonet-Monroig, Babbush, and O'Brien~\cite{BBOalgo, private} showed that the number of such simultaneously-measurable sets to evaluate the fermionic 2-RDM, which is sufficient to determine the expectation value of the Hamiltonian, is lower bounded by $\Omega(N^2)$ for $2N$ fermion systems.
They also presented an algorithm to construct the $\frac{10}{3}N^2 + O(N)$ sets of simultaneously-measurable operators for evaluating the fermionic 2-RDM (see Sec.~\ref{subsec: discussion on comparison}), which is optimal up to a constant factor, by using a technique based on a binary partitioning of $2N$ integers.
Hereafter, we call their algorithm \textit{BBO algorithm} after the initials of the authors.

\textbf{Contribution.}
In this work, we provide an algorithm based on a finite projective plane in mathematics to construct $2N^2$ sets of simultaneously-measurable fermion operators which are sufficient to determine the expectation value of generic molecular Hamiltonian in quantum chemistry~[Eq.~\eqref{eq: original mol Ham}].
By using the symmetry of the coefficients of the quantum chemistry Hamiltonian~[Eq.~\eqref{eq: symm. of coef}], we classify terms in the Hamiltonian into several types and associate each of them with sets of simultaneously-measurable operators and quantum circuits to measure the operators in the set.
The crucial and novel contribution of our algorithm is the use of a finite projective plane to construct the sets of simultaneously-measurable operators for evaluating expectation values of the products of four distinct fermion operators (Sec.~\ref{subsubsec: clique for same spin}).
We formulate the problem of finding the sets of simultaneously-measurable operators as a minimal \textit{edge clique cover} problem of a certain graph, and find the almost optimal solution of it by using the finite projective plane.
The total number of the sets of simultaneously-measurable operators for the whole Hamiltonian scales as $2N^2 + O(N)$ for $2N$ fermion systems in our algorithm when $N=\Pi^K$ for a prime $\Pi$ and an integer $K>0$.
It improves the BBO algorithm which scales as $10N^2/3 + O(N)$ when applied in the same setup.
We also show that the quantum circuits for measuring those sets can be constructed by $O(N^2)$ one- and two-qubit gates with $O(N)$ depth in the case of Jordan-Wigner mapping~\cite{Jordan1928} and the parity mapping~\cite{BRAVYI2002210}.
We perform numerical simulation of our method for molecular Hamiltonians of hydrogen chains up to $N=30$ orbitals.
Our work gives a simple and efficient protocol to measure the expectation value of the molecular Hamiltonian in quantum chemistry so it can accelerate various applications of quantum computers to quantum chemistry calculation.
Moreover, since the minimum edge clique cover problem for general graphs is known to be NP-hard, our algorithm using the finite projective plane to create an almost optimal solution for the specific graph may be of an independent interest in the graph theory (we stress that our algorithm does not solve the minimum edge clique cover problem for \textit{general} graphs; it does solve for the specific graph corresponding to the quantum chemistry Hamiltonians).

This article is organized as follows.
We define our problem of measuring the expectation value of the Hamiltonian in Sec.~\ref{sec: problem def}.
Section~\ref{sec: main result} describes our main results: an algorithm to measure the expectation value of the Hamiltonian with $2N^2$ distinct quantum circuits.
We discuss our result and compare it with the previous studies in Sec.~\ref{sec: discussion}.
Summary and outlook are presented in Sec.~\ref{sec: summary}.
\section{Problem definition \label{sec: problem def}} 
In this section, we describe a concrete setup of the problem which we study. 

\subsection{Electronic structure Hamiltonian in quantum chemistry}
A central task in applying quantum computers to quantum chemistry calculations is to estimate an expectation value of the electronic structure Hamiltonian with $2N$ spin orbitals (fermions) in the form of
\begin{align}
\label{eq: original mol Ham}
    H = E_\mathrm{n} + \sum_{\sigma=\uparrow, \downarrow} \sum_{p,q=0}^{N-1}  h_{pq,\sigma}a^\dag_{p\sigma} a_{q\sigma} + \frac{1}{2} \sum_{\sigma,\tau=\uparrow, \downarrow} \sum_{p,q,r,s=0}^{N-1}  g_{pqrs,\sigma\tau}a^\dag_{p\sigma} a_{q\sigma} a^\dag_{r\tau} a_{s\tau},
\end{align}
where $E_\mathrm{n}$ is a nuclear repulsion energy (scalar), $a_{p\sigma}^\dag$ and $a_{q\sigma}$ are fermion creation and annihilation operators, respectively, satisfying the fermion anti-commutation relation $\{a_{p\sigma}^\dag, a_{q\tau}^\dag\} := a_{p\sigma}^\dag a_{q\tau}^\dag + a_{q\tau}^\dag a_{p\sigma}^\dag = 0, \{ a_{p\sigma}, a_{q\tau} \}= 0, \{a_{p\sigma}^\dag, a_{q\tau}\} = \delta_{pq} \delta_{\sigma\tau}$.
Two-electron integral $g_{pqrs, \sigma\tau}$ is defined as
\begin{align}
    g_{pqrs,\sigma\tau} = \iint \phi_{p\sigma}^*\left(\bm{r}_1\right)\phi_{q\sigma}\left(\bm{r}_1\right) \frac{1}{ |\bm{r}_{1} -\bm{r}_{2}|}  \phi_{r\tau}^*\left(\bm{r}_2\right)\phi_{s\tau}\left(\bm{r}_2\right) d \bm{r}_1 d \bm{r}_2,
\end{align}
and $h_{pq,\sigma}$ is defined as
\begin{align}
    h_{pq,\sigma} = - \frac{1}{2}\sum_{r} g_{prrq,\sigma\sigma} + \int \phi_{p\sigma}^*\left(\bm{r}_1\right)\left(-\frac{1}{2} \nabla_{\bm{r}_1}^2-\sum_{A=1}^{N_a} \frac{Z_A}{ |\bm{r}_{1} -\bm{R}_A| }\right) \phi_{q\sigma}\left(\bm{r}_1\right) d \bm{r}_1,
\end{align}
where $\phi_{p\sigma}(\bm{r})$ is a one-particle wavefunction for the orbital $p$ with spin $\sigma$, $N_a$ is the number of nuclei in the molecule, and $\bm{R}_A (Z_A)$ is the coordinate (charge) of the nucleus $A$, respectively.

When the orbital $\phi_{p\sigma}(\bm{r})$ is real-valued, which is the case for most calculations in quantum chemistry\footnote{Complex-valued orbitals are used, for example, when there is a magnetic field or the relativistic effect.}, there are symmetries in the coefficients of the Hamiltonian:
\begin{equation}
\label{eq: symm. of coef}
 g_{pqrs,\sigma\tau} = g_{qprs,\sigma\tau} = g_{pqsr,\sigma\tau}, \: h_{pq,\sigma}=h_{qp,\sigma}.
\end{equation}
Using these symmetries, we can rewrite the Hamiltonian as
\begin{align}
\label{eq: mol Hamiltonian}
 H = E_\mathrm{n} + \frac{1}{2} \sum_{\sigma=\uparrow, \downarrow} \sum_{p,q=0}^{N-1}  h_{pq,\sigma} A_{pq,\sigma} + \frac{1}{8} \sum_{\sigma,\tau=\uparrow, \downarrow} \sum_{p,q,r,s=0}^{N-1}  g_{pqrs,\sigma\tau} A_{pq,\sigma} A_{rs, \tau},
\end{align}
where $A_{pq, \sigma}$ is defined as
\begin{align}
    A_{pq, \sigma} := a^\dag_{p\sigma} a_{q\sigma}+a^\dag_{q\sigma} a_{p\sigma}.
\end{align}
We can therefore estimate the expectation value of Hamiltonian for a given state $\ket{\psi}$ by measuring $\ev{A_{pq,\sigma}} := \ev{A_{pq,\sigma}}{\psi}$ and $\ev{A_{pq,\sigma}A_{rs,\tau}} := \ev{A_{pq,\sigma}A_{rs,\tau}}{\psi}$ for all possible combinations of $p,q,r,s,\sigma,\tau$.
In the rest of this article, we describe how to measure such expectation values.

\subsection{Measurement cliques}
Mutually-commuting operators are simultaneously measurable.
We call a set of mutually-commuting operators \textit{a measurement clique}.
For a given measurement clique, we can associate it with one quantum circuit to perform projective measurement on simultaneous-eigenstates of all the operators in the measurement clique.
Repetitive execution of that circuit yields an estimate of an expectation value of any product of the operators in the measurement clique.
For example, when a measurement clique consists of two commuting operators $O_1$ and $O_2$ satisfying $[O_1,O_2]:= O_1O_2 - O_2O_1=0$, we can estimate the expectation values such as $\ev{O_1}, \ev{O_2}$, and $\ev{O_1 O_2}$ by the projective measurement on the simultaneous eigenstates of $O_1$ and $O_2$.

We want to construct a set of measurement cliques that covers all operators contained in the Hamiltonian~[Eq.~\eqref{eq: mol Hamiltonian}], namely,
\begin{equation}
 \mathbb{U} = \{ \mathcal{C}_1, \cdots, \mathcal{C}_M\}, 
 \quad
 \mathcal{C}_i = \left\{O_1^{(i)}, \cdots,O_{L_i}^{(i)}  \left| [O_j^{(i)}, O_k^{(i)}] = 0 \text{ for all } j,k\right. \right\},
\end{equation}
where $\mathcal{C}_i$ is $i$-th measurement clique, $M$ is the total number of the measurement cliques, $O_j^{(i)}$ is $j$-th operator in $\mathcal{C}_i$, and $L_i$ is the number operators in $\mathcal{C}_i$.
We require $\mathbb{U}$ to satisfy the condition that all $A_{pq,\sigma}$ and $A_{pq,\sigma} A_{rs,\tau}$ in the Hamiltonian [Eq.~\eqref{eq: mol Hamiltonian}] can be written as a single operator or a product of operators contained in some measurement clique $\mathcal{C}_i$.
If that condition is satisfied, we can measure all the expectation values of $\ev{A_{pq,\sigma}}$ and $\ev{A_{pq,\sigma} A_{rs,\tau}}$ by using $M$ distinct quantum circuits.
Therefore, the problem of estimating the expectation value of the Hamiltonian with a small number of distinct quantum circuits is reduced to find a set of measurement cliques whose number is as small as possible.
We construct such a set with $M \sim 2N^2 + O(N)$ in the next section.

\section{Main result: an efficient measurement  scheme for molecular Hamiltonians \label{sec: main result}}
We explain our main results in this section.
First we classify the terms in the molecular Hamiltonian~[Eq.~\eqref{eq: mol Hamiltonian}] into several types.
We then construct measurement cliques to cover all the types of the terms.
There are $O(N^4)$ terms in the Hamiltonian, but the number of measurement cliques can be reduced to $O(N^2)$ by the strategy that the terms in the Hamiltonian are reconstructed by a product of two operators in the clique.
Finally, we explain the quantum circuits associated with those measurement cliques and discuss their number of gates.

For the later use, we define the particle number operator of fermion as $n_{p\sigma} := a_{p\sigma}^\dag a_{p\sigma}$.
We also define $\bar{\sigma}$ representing the opposite spin of $\sigma$, i.e., $\bar{\uparrow} = \downarrow, \bar{\downarrow} = \uparrow$.

\subsection{Classification of terms}\label{sec:classification}
The Hamiltonian~\eqref{eq: mol Hamiltonian} consists of $A_{pq,\sigma}$ and $A_{pq,\sigma}A_{rs,\tau}$.
Independent terms coming from $A_{pq,\sigma}$ can be classified into two types by considering two cases $p=q$ and $p\neq q$:
\begin{itemize}
 \item[(1-1)] $n_{p\sigma}$ for $p=0,\cdots,N-1$ and $\sigma=\uparrow,\downarrow$.
 \item[(1-2)] $A_{pq,\sigma}$ for $0 \leq p < q \leq N-1$ and $\sigma=\uparrow,\downarrow$.
\end{itemize}
We note that we have only to consider $p>q$ when $p\neq q$ because $A_{pq, \sigma} = A_{qp,\sigma}$.

Similarly, independent terms coming from $A_{pq,\sigma}A_{rs,\tau}$ can be classified as follows.
When $\sigma \neq \tau$, there are three types of terms:
\begin{itemize} 
 \item[(2-1)] $n_{p\sigma} n_{q\bar{\sigma}}$ for $0 \leq p < q \leq N-1$ and $\sigma=\uparrow,\downarrow$,
  \item[(2-2)] $A_{pq,\sigma} n_{r\bar{\sigma}}$ for $0 \leq p < q \leq N-1, r=0,\cdots,N-1$, and $\sigma=\uparrow,\downarrow$,
  \item[(2-3)] $A_{pq,\sigma} A_{rs,\bar{\sigma}}$ for $0 \leq p < q \leq N-1, 0 \leq r < s \leq N-1$, and $\sigma=\uparrow,\downarrow$.
\end{itemize}
When $\sigma = \tau$, we have
\begin{itemize}
 \item[(2-4)] $n_{p\sigma}, n_{p\sigma} n_{q\sigma}$ for $0 \leq p < q \leq N-1$ and $\sigma=\uparrow,\downarrow$,
 \item[(2-5)] $A_{pq,\sigma}$ for $0 \leq p < q \leq N-1$ and $\sigma=\uparrow,\downarrow$ (this is the same as (1-2)),
 \item[(2-6)] $A_{pq,\sigma} n_{r\sigma}$ for $0 \leq p < q \leq N-1, r=0,\cdots,N-1, r\neq p, r\neq q$, and $\sigma=\uparrow,\downarrow$,
  \item[(2-7)] $A_{pq,\sigma} A_{rs,\sigma}$ with mutually different $p,q,r,s$ satisfying $0 \leq p < q \leq N-1, 0 \leq r < s \leq N-1$, and $\sigma=\uparrow,\downarrow$.
\end{itemize}
The number of terms for the cases (2-3)(2-7) is $O(N^4)$ and that for (2-2)(2-6) is $O(N^3)$ while the other cases (2-1)(2-4)(2-5) contain $O(N^2)$ terms.
Therefore, the challenge lies in how to cover the four cases (2-3)(2-7)(2-2)(2-6) with $O(N^2)$ measurement cliques.

\subsection{Construction of measurement cliques}
Here we construct a set of measurement cliques to cover all the types in the Hamiltonian discussed above.
The resulting measurement cliques are summarized as Tab. \ref{tab:cliques}.

\begin{table*}
    \caption{\label{tab:cliques} Summary of measurement cliques, the number of cliques for each type (when the number of molecular orbitals $N$ satisfies $N=\Pi^K+1$ for a prime $\Pi$ and an integer $K>0$), and corresponding terms in the Hamiltonian.}
    \begin{ruledtabular}
        \begin{tabular}{p{0.20\linewidth}p{0.40\linewidth}cp{0.20\linewidth}}
            Definition of cliques & Construction & Number of cliques & Corresponding terms described in Sec.~\ref{sec:classification}   \\ \hline
            $\mathcal{C}^{\mathrm{part}}$ (Eq.~\eqref{eq:Cpart def})& Particle number operators for all molecular orbitals and spins. (Sec.~\ref{sec:Cpart}) & 1 & (1-1), (2-1), (2-4) \\
            $\mathbb{U}^{[1]}$ (Eqs. \eqref{eq: mutually commuting int. set}-\eqref{eq:tildeC def}) & Iteration through all possible parings of integers $0,\cdots,N-1$, which can be done through scheduling of round-robin tournaments. (Sec.~\ref{sec:1RDM}) & $2(N-1)$ & (1-2), (2-2), (2-5)\\
            $\mathbb{U}^{[\mathrm{2, diff}]}$ (Eq.~\eqref{eq:U2diff def}) & Direct product of cliques $\mathcal{C}_{i,\sigma}^{[1]}$ [Eq.~\eqref{eq:def of C1}] for opposite spins. (Sec.~\ref{sec:2RDM-diff}) & $(N-1)^2$ & (2-3)  \\
            $\mathbb{U}^{[\mathrm{2, same}]}$ (Eq.~\eqref{eq: def of U2 same}) & Using a finite projective plane (Fig.~\ref{fig:projective-plane}) and this is our main contribution. (Sec.~\ref{subsubsec: clique for same spin}) & $(N-1)^2$ & (2-6), (2-7)
        \end{tabular}
    \end{ruledtabular}
\end{table*}

\subsubsection{Measurement clique for particle number operators \label{sec:Cpart}}
The first measurement clique is for the particle number operator:
\begin{equation}\label{eq:Cpart def}
 \mathcal{C}^\mathrm{part} = \{n_{0\uparrow}, \cdots,  n_{N-1\uparrow}, n_{0\downarrow}, \cdots,  n_{N-1\downarrow} \}.
\end{equation}
This is a measurement clique because all the particle number operators commute: $[n_{p\sigma}, n_{q\tau}]=0$.
$\mathcal{C}_\mathrm{part}$ can determine the expectation values of the terms of the types (1-1), (2-1), and (2-4).

\subsubsection{Measurement clique for $A_{pq,\sigma}$ \label{sec:1RDM}}
Next, we consider a measurement clique that can evaluate $\ev{A_{pq, \sigma}}$ with $0\leq p<q\leq N-1$.
By observing that $A_{pq, \sigma}$ and $A_{rs,\sigma}$ commute if $p,q,r,s$ are mutually different, a set of pairs of integers,
\begin{align}
\label{eq: mutually commuting int. set}
 I_i = \{ (p_1^{(i)}, q_1^{(i)}), \cdots, (p_{L_i}^{(i)}, q_{L_i}^{(i)}) \left| \text{$0 \leq p_j^{(i)} < q_j^{(i)} \leq N-1$ and $p_j^{(i)}, q_k^{(i)}$ for $j,k=1,\cdots,L_i $ are mutually different.} \right.\} 
\end{align}
can define a measurement clique
\begin{equation}\label{eq:def of C1}
 \mathcal{C}^{[1]}_{i,\sigma} = \{ A_{pq, \sigma} \left| (p,q) \in I_i \right. \}.
\end{equation}
Therefore, it suffices to find the sets $I_1, I_2, \cdots, I_{M^{[1]}}$ satisfying Eq.~\eqref{eq: mutually commuting int. set} and
\begin{equation}
\label{eq: completeness of pairs}
 \bigcup_{i=1}^{M^{[1]}} I_i = \{ (p,q) \left| 0 \leq p<q \leq N-1 \right. \}
\end{equation}
to estimate all expectation values of $\ev{A_{pq,\sigma}}$ with $p<q$.
For any given $p<q$, there is some $I_i$ such that $(p,q) \in I_i$ so that we can evaluate $\ev{A_{pq,\sigma}}$ by the measurement clique $\mathcal{C}^{[1]}_{i,\sigma}$. 
One of the ways to find such sets was proposed in the paper of BBO algorithm~\cite{BBOalgo} but we can also use various scheduling algorithm for round-robin tournaments.
The number of measurement cliques is $M^{[1]} = N -1$ ($M^{[1]} = N$) when $N$ is even (odd).

Finally, for measuring the terms of types (1-2), (2-2), and (2-5), we use the following set of measurement cliques.
\begin{align}
 \mathbb{U}^{[1]} &:= \mathbb{U}^{[1]}_{\uparrow} \cup \mathbb{U}^{[1]}_{\downarrow}, \\
 \mathbb{U}^{[1]}_{\sigma} &:= \{\tilde{\mathcal{C}}^{[1]}_{1,\sigma}, \cdots,  \tilde{\mathcal{C}}^{[1]}_{M^{[1]},\sigma} \},  \:
 \tilde{\mathcal{C}}^{[1]}_{i,\sigma} := \mathcal{C}^{[1]}_{i,\sigma} \cup \{n_{0\bar{\sigma}}, \cdots,  n_{N-1\bar{\sigma}} \}.\label{eq:tildeC def}
\end{align}
Note that we add the particle number operators of the opposite spin to $\mathcal{C}^{[1]}_{i,\sigma}$ to cover the type (2-2).
The total number of cliques contained in $\mathbb{U}^{[1]}$ is $2M^{[1]}$.

\subsubsection{Measurement clique for $A_{pq,\sigma}A_{rs,\tau}$ with different spin \label{sec:2RDM-diff}}

Expectation values $\ev{A_{pq,\sigma}A_{rs,\tau}}$ with different spin $\sigma\neq \tau$ can be measured with $(M^{[1]})^2$ distinct measurement cliques as follows.
We consider a direct product of the sets,
$\{\mathcal{C}^{[1]}_{1,\uparrow}, \cdots, \mathcal{C}^{[1]}_{M^{[1]},\uparrow} \} \times \{\mathcal{C}^{[1]}_{1,\downarrow}, \cdots, \mathcal{C}^{[1]}_{M^{[1]},\downarrow} \}$.
Namely, the set of measurement cliques
\begin{equation}
 \mathbb{U}^{[2,\mathrm{diff}]} := \left\{  \mathcal{C}^{[1]}_{i,\uparrow} \cup \mathcal{C}^{[1]}_{j,\downarrow} \left| i,j=1,\cdots, M^{[1]} \right. \right\} \label{eq:U2diff def}
\end{equation}
can evaluate all expectation values $\ev{A_{pq,\sigma}A_{rs,\tau}}$ with $\sigma \neq \tau$ (note that $[A_{pq,\sigma}, A_{rs,\tau}] = 0$ for any $p,q,r,s$ when $\sigma\neq\tau$).
The number of the measurement cliques in $\mathbb{U}^{[2,\mathrm{diff}]}$ is apparently $(M^{[1]})^2$.
The type (2-3) is covered by $\mathbb{U}^{[2,\mathrm{diff}]}$.

\subsubsection{Measurement clique for $A_{pq,\sigma}A_{rs,\tau}$ with the same spin
\label{subsubsec: clique for same spin}}
Expectation values $\ev{A_{pq,\sigma}A_{rs,\sigma}}$ and $\ev{n_{p\sigma}A_{rs,\sigma}}$ (the same spin case, the types (2-6) and (2-7)) are most non-trivial to construct measurement cliques. 
Here, we propose a novel algorithm to create the measurement cliques for them by using a projective finite plane.
This part is one of the main contributions of this study.

First, observe that the following equations
\begin{align}
 [A_{pq,\sigma}, A_{rs,\sigma}] = 0, \: [n_{p\sigma}, A_{rs,\sigma}] = 0,
\end{align}
hold if $p,q,r,s$ are mutually different.
We want to find sets of pairs of integers $J_1, J_2, \cdots, J_{M^{[2]}}$,
\begin{equation}
\label{eq: def of J_i's}
 J_i =  \{ (r_1^{(i)}, s_1^{(i)}), \cdots, (r_{M_i}^{(i)}, s_{M_i}^{(i)}) \left| 0 \leq r_j^{(i)} \leq s_j^{(i)} \leq N-1 \right.\},
\end{equation}
satisfying three conditions,
\begin{itemize}
 \item $r_j^{(i)}, s_k^{(i)}$ for $j,k=1,\cdots,M_i $ are mutually different except for the cases $r_j^{(i)}=s_j^{(i)}$.
 \item For any mutually distinct integers $0\leq p,q,r,s \leq N-1$ with $p<q$ and $r<s$, there exists some $J_i$ that contains $(p,q)$ and $(r,s)$.
 \item For any mutually distinct integers $0\leq p,r,s \leq N-1$ with $r<s$, there exists some $J_i$ that contains $(p,p)$ and $(r,s)$.
\end{itemize}
Note that the set $J_i$ allows a pair of the same integer like $(p,p)$ in contrast with $I_i$ in Eq.~\eqref{eq: mutually commuting int. set}.
Once such sets are found, the measurement cliques
\begin{equation}\label{eq: def of U2 same}
 \mathbb{U}^{[2, \mathrm{same}]} := \{ \mathcal{C}^{[2]}_{1}, \cdots, \mathcal{C}^{[2]}_{M^{[2]}} \}, \:
 \mathcal{C}^{[2]}_{i} = \mathcal{C}^{[2]}_{i,\uparrow}\cup\mathcal{C}^{[2]}_{i,\downarrow},\:
 \mathcal{C}^{[2]}_{i,\sigma} := \{ A_{pq, \sigma} \left| (p,q) \in J_i \right. \}.
\end{equation}
can evaluate all the expectation values $\ev{A_{pq,\sigma}A_{rs,\sigma}}$ and $\ev{n_{p\sigma}A_{rs,\sigma}}$ for any mutually distinct integers $0\leq p,q,r,s \leq N-1$ with $p<q$ and $r<s$.
We note that $A_{pp,\sigma} = 2n_{p\sigma}$.

\textbf{Formulation as a edge clique cover problem.}
We then focus on how to obtain the sets $J_i$ for a given $N$.
We formulate the problem of finding the sets $J_i$ as a clique cover of a graph.
Let us define a graph $G$ with a set of vertices
\begin{equation}
\label{eq: def of vertex of G}
 V = V_1 \cup V_2, \: V_1 =\{(p,p) |p=0,1,\cdots, N-1 \}, V_2 = \{ (p,q) |0 \leq p<q \leq N-1\}.    
\end{equation}
A vertex $v=(p,q)$ corresponds to an operator $A_{pq,\sigma}$.
Edges of the graph $G$ between two vertices $v_a = (v_{a,1}, v_{a,2})$ and $v_b = (v_{b,1}, v_{b,2})$ are placed if either of the following three cases is met: (1) $v_a, v_b \in V_1, v_{a,1}\neq v_{b,1}$, (2) $v_a \in V_1,  v_b \in V_2$ and $v_{a,1}, v_{b,1}, v_{b,2}$ are mutually different, and (3) $v_a, v_b \in V_2$ and $v_{a,1}, v_{a,2}, v_{b,1}, v_{b,2}$ are mutually different.
When an operator corresponding to a vertex $v=(p,q)$ is denoted $O_v$, i.e., $O_v:=A_{pq,\sigma}$, this graph $G$ has an edge between $v_a$ and $v_b$ if and only if $[O_{v_a}, O_{v_b}]=0$.
As an example, Fig.~\ref{fig:graphG} shows $G$ for $N=6$.
It means that a clique $C$ (a subset of vertices with edges between all pairs of vertices contained in the set) of $G$ can define a measurement clique.
Moreover, the edge between $v_a$ and $v_b$ can be associated with the operator $O_{v_a} O_{v_b}$, and all the operators $\ev{A_{pq,\sigma}A_{rs,\sigma}}$ and $\ev{n_{p\sigma}A_{rs,\sigma}}$ with any mutually distinct integers $0\leq p,q,r,s \leq N-1$ with $p<q$ and $r<s$ are associated to the edges of the graph with one-to-one correspondence.
Therefore, finding the sets of $J_i$ for measurement cliques is reduced to the problem of finding the \textit{edge} clique cover of the graph $G$.
Finding an edge clique cover of a given graph with the minimal number of the cliques is NP-hard in general.
However, in our case, we know the property of the graph well and we can explicitly construct the edge clique cover having an almost optimal number of cliques with the classical computational time of $O(N^3)$.
We note that our graph $G$ is different from those in the previous studies~\cite{verteletskyi2020,crawford2021efficient,zhao2020,yen2020,gokhale2020ON3,jena2019pauli,izmaylov2020,hamamura2020} where the Pauli terms of the Hamiltonian are vertices; rather, \textit{edges} correspond to the terms in the Hamiltonian in our graph similarly to the BBO's approach \cite{BBOalgo}.

\begin{figure}
    \centering
\includegraphics[width=0.5\linewidth]{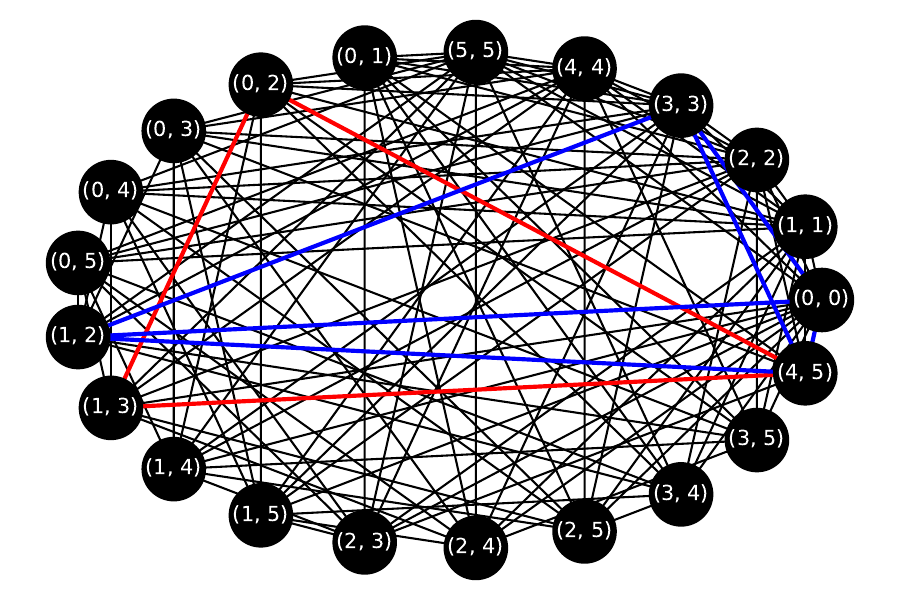}
    \caption{The graph $G$ defined in the main text for $N=6$. The bold colored edges correspond to the clique found by the procedure shown in Fig.~\ref{fig:projective-plane-example}. The red and blue edges corresponds to the cliques corresponding to $P_\gamma(4,3)$ and $P_\gamma(4,0)$, respectively.}
    \label{fig:graphG}
\end{figure}

\textbf{Construction of cliques using the finite projective plane.}
Our algorithm works when $N=\Pi^K+1$ for a prime $\Pi$ and an integer $K>0$.
When this is not the case, it suffices to choose the smallest $\Pi^K$ larger than $N-1$.
We consider a finite projective plane of the order $\Pi^K$ as the Desarguesian plane using the finite field $\mathbb{F}_{\Pi^K}$~\cite{albert2014introduction}.
Below, for simplicity, we restrict ourselves to the case of $K=1$ where the finite field is $\mathbb{Z}_\Pi$ and all the arithmetics (addition, multiplication, and their inverse operations) on the elements of the field are those of integers modulo $\Pi$. 
All the discussion described here applies to the case of arbitrary $K > 1$ by considering the corresponding arithmetics on the finite field.

The projective plane is defined by $\mathcal{P}$, a set called points, and $\mathcal{L}$, a family of the subset of $\mathcal{P}$ which is called lines.
The set $\mathcal{P}$ consists of three types of points: a single point $P_\alpha$, $\Pi$ points $\{P_\beta(y)\}_{y=0}^{\Pi-1}$, and $\Pi^2$ points  $\{P_\gamma(x,y)\}_{x,y=0}^{\Pi-1}$.
(see also Fig.~\ref{fig:projective-plane}).
The lines also consists of three types of the subset of points:  $\mathcal{L}=\{L_\alpha\}\cup\{L_\beta(i)\}_{i=0}^{\Pi-1}\cup\{L_\gamma(i,j)\}_{i,j=0}^{\Pi-1}$, defined as
\begin{align}
\begin{split}
    L_\alpha &= \left\{P_\alpha, P_\beta(0), \ldots, P_\beta(\Pi-1)\right\}, \\
    L_\beta(i) &= \left\{P_{\alpha}, P_{\gamma}(i, 0), \ldots, P_{\gamma}(i, \Pi-1)\right\}, \\
    L_\gamma(i,j) &= \left\{P_\beta(i), P_\gamma(k,(i k+j) \bmod \Pi)\right\}_{k=0}^{\Pi-1}.
\end{split}
\end{align}
The total number of lines is $\Pi^2+\Pi+1$.
Intuitively, the point $\{ P_\gamma(x,y) \}$ is interpreted as a two-dimensional plane with coordinates $x,y=0,\cdots, \Pi-1$, and the points $\{ P_\beta(x)\}$ and $P_\alpha$  are points at infinity.
$\mathcal{P}$ and $\mathcal{L}$ are known to constitute a finite projective plane.
This means that for any two points in $\mathcal{P}$ there exists only one line that contains (passes) both.
Also, for any two lines in $\mathcal{L}$, there exists only one point that is contained by both simultaneously.

Our idea to find a edge clique cover of $G$ begins by placing $\Pi+1 = N$ points on the plane.
Namely, we define $\mathcal{S}=\{ S(0), \cdots, S({\Pi}) \}$, a subset of $\mathcal{P}$, as (see Fig.~\ref{fig:projective-plane}) 
\begin{align}
 S(\Pi) = P_\alpha, \: S(k) = P_\gamma(k,k^2\bmod \Pi) \: (k=0,1,\cdots,\Pi-1).
\end{align}
We then associate a line $L_v$ of the plane to each vertex $v$ of the graph $G$ as follows.
For $v = (l, l) \in V_1$~(Eq.~\eqref{eq: def of vertex of G}), the line $L_v$ is defined as the line that passes $S(l)$ and does not pass other points in $\mathcal{S}$.
There always exists only one such line on the plane (see Appendix~\ref{appsubsec: exist-one-line} for proof).
For $v = (l, l') \in V_2$~(Eq.~\eqref{eq: def of vertex of G}), the line $L_v$ is defined as the line that passes both $S(l)$ and $S(l')$.
The property of the finite projective plane assures the existence and uniqueness of such line.
Moreover, distinct vertices are always associated to distinct lines, i.e., $v\neq v' \in V \Leftrightarrow L_v \neq L_{v'}$ holds because there is no line that passes three distinct points in $\mathcal{S}$ \cite{segre_1955,segrearxiv} (for completeness,  we prove this fact in Appendix \ref{appsubsec:no-three-point-proof}).

The complementary set of $\mathcal{S}$ in $\mathcal{P}$, $\mathcal{K} := \mathcal{P} \setminus \mathcal{S}$, consists of $(\Pi^2 + \Pi + 1) - (\Pi + 1) = \Pi^2$ points.
To a given point $p_k$ in $\mathcal{K}$ ($k=1,\cdots,\Pi^2$), we associate a set of vertices $C_k$ defined as (see Fig.~\ref{fig:projective-plane-example}),
\begin{equation}
\label{eq: def of c_k clique}
 C_k := \{ v \in V \: \text{such that $L_v$ passes $p_k$} \}.
\end{equation}
We claim that $\mathcal{C}_1, \cdots, \mathcal{C}_{\Pi^2}$ are cliques of the graph $G$ and constitute a edge clique cover of $G$.
\begin{prop}
 A set $\{ C_k \}_{k=1}^{\Pi^2}$ satisfies the followings:
 \begin{enumerate}
 \item[(a)] Each $C_k$ is a clique of the graph $G$.
 \item[(b)] $\{ C_k \}_{k=1}^{\Pi^2}$ is a edge clique cover of $G$. That is, for any edge between two distinct vertices $v$ and $v'$ on the graph $G$, there exists some $C_k$ satisfying $v, v' \in C_k$.
 \end{enumerate}
\end{prop}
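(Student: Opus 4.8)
The plan is to reduce everything to one structural fact about the points $S(0),\dots,S(\Pi)$ and then invoke only the two incidence axioms of a projective plane (two distinct points lie on a unique common line; two distinct lines meet in a unique common point). First I would record the combinatorial meaning of adjacency in $G$: writing $\mathrm{idx}(v):=\{p,q\}$ for a vertex $v=(p,q)$ (a one-element set when $p=q$, a two-element set otherwise), the three cases in the definition of the edges of $G$ collapse to the single statement that distinct vertices $v,v'$ are adjacent in $G$ if and only if $\mathrm{idx}(v)\cap\mathrm{idx}(v')=\emptyset$. Then I would establish the bridge property, call it $(\ast)$: for every vertex $v$ and every $m\in\{0,\dots,\Pi\}$, the line $L_v$ passes through $S(m)$ if and only if $m\in\mathrm{idx}(v)$. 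For $v=(l,l')\in V_2$ this follows from the construction of $L_v$ as the secant through $S(l),S(l')$ together with the fact that no line meets $\mathcal{S}$ in three distinct points (Appendix~\ref{appsubsec:no-three-point-proof}); for $v=(l,l)\in V_1$ it is immediate from the definition of $L_v$ as the line through $S(l)$ that avoids every other point of $\mathcal{S}$ (whose existence and uniqueness is Appendix~\ref{appsubsec: exist-one-line}).

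For part (a), I would fix $p_k\in\mathcal{K}$ and take two distinct $v,v'\in C_k$, so that $L_v$ and $L_{v'}$ both pass through $p_k$. Assume toward a contradiction that some $m$ lies in $\mathrm{idx}(v)\cap\mathrm{idx}(v')$; by $(\ast)$ both lines then also pass through $S(m)$. Since $S(m)\in\mathcal{S}$ and $p_k\in\mathcal{K}=\mathcal{P}\setminus\mathcal{S}$ are distinct points, the uniqueness of the line through two points forces $L_v=L_{v'}$, contradicting the fact (recalled in the main text, proved via the no-three-collinear statement) that distinct vertices are associated with distinct lines. Hence $\mathrm{idx}(v)\cap\mathrm{idx}(v')=\emptyset$, so $v$ and $v'$ are adjacent in $G$, and $C_k$ is a clique.

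For part (b), I would take an edge of $G$ joining distinct vertices $v$ and $v'$, so $\mathrm{idx}(v)\cap\mathrm{idx}(v')=\emptyset$. Because $v\ne v'$ we have $L_v\ne L_{v'}$, so these two lines intersect in a unique point $p^\ast$. If $p^\ast\in\mathcal{S}$, say $p^\ast=S(m)$, then both $L_v$ and $L_{v'}$ pass through $S(m)$, and $(\ast)$ gives $m\in\mathrm{idx}(v)\cap\mathrm{idx}(v')$, contradicting disjointness. Therefore $p^\ast\in\mathcal{K}$, i.e.\ $p^\ast=p_k$ for some $k$, and by the definition of $C_k$ we get $v,v'\in C_k$. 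This produces the required clique and shows $\{C_k\}_{k=1}^{\Pi^2}$ is an edge clique cover of $G$.

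All the content sits in property $(\ast)$ and the arc property ("no three of the $S(m)$ collinear") that it rests on; after that, (a) and (b) are each a two-line argument from the incidence axioms. I therefore expect the only real obstacle to be treating the tangent-line vertices $v\in V_1$ on the same footing as the secant vertices $v\in V_2$ — specifically, ensuring the tangent line at $S(l)$ is well defined and that $(\ast)$ genuinely holds for it — so that the diagonal case $p=q$ needs no separate handling in either part.
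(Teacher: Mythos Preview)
Your proof is correct and follows essentially the same approach as the paper's: both reduce (a) and (b) to the observation that non-adjacency of $v$ and $v'$ forces $L_v$ and $L_{v'}$ to share a point of $\mathcal{S}$, which together with the injectivity $v\neq v'\Rightarrow L_v\neq L_{v'}$ and the uniqueness of the intersection of two distinct lines is incompatible with both lines passing through $p_k\in\mathcal{K}$. Your index-set notation and bridge property $(\ast)$ cleanly unify what the paper handles by a short case split over $V_1$ versus $V_2$, but the underlying argument is identical.
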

\begin{proof}
To prove (a), consider two distinct vertices $v=(v_1, v_2), v'=(v'_1, v'_2)$ in $C_k$.
Two distinct lines $L_{v}, L_{v'}$ associated to these vertices intersects at $p_k \in \mathcal{K}$ by definition.
Suppose that there does not exist an edge between $v$ and $v'$.
If $v, v'\in V_1$, the non-existence of the edge implies $v_1 = v_2 = v'_1 = v'_2$, but this contradicts with $v\neq v'$.
If $v \in V_1$ and $v' \in V_2$, the non-existence of the edge implies $v_1 = v_2 = v'_1 \neq v'_2$ (or $v_1 = v_2 = v'_2 \neq v'_1$) holds and that $L_v$ $L_{v'}$ intersects at the point $S(v_1) \in \mathcal{S}$ other than $p_k$.
This contradicts with the uniqueness of the intersection between two lines.
Similarly, if $v,v' \in V_2$, the non-existence of the edge implies that $L_v$ and $L_{v'}$ intersects at some point in $\mathcal{S}$ other than $p_k$.
Again this contradicts with the uniqueness of the intersection between two lines.
Therefore, there must exist an edge between $v$ and $v'$ on the graph $G$, which proves (a).

To prove (b), let us take an edge between $v$ and $v'$ on the graph $G$ (note that $v\neq v'$).
There are two distinct lines $L_v$ and $L_{v'}$, and it has exactly one intersection $P \in \mathcal{P}$ on the plane.
The point $P$ must be in $\mathcal{K}$ due to the same argument to prove (a). 
Therefore, there is some $k$ satisfying $p_k = P$, and $C_k$ contains $v$ and $v'$.
\end{proof}

\begin{figure}
    \centering
    \includegraphics[width=0.45\linewidth]{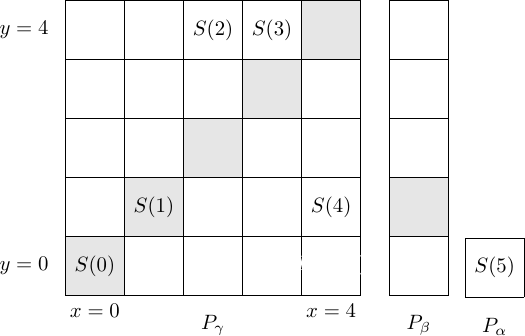}
    \caption{The projective plane used in our algorithm for $N=6 \: (\Pi=5)$. Each grid represents a point of the projective plane. As an example of a line, $L_\gamma(1,0)$ is represented by shaded grids. We label $N$ points $\{S(k)\}_{k=0}^{\Pi}$ such that a line passes either only one $S(k)$ or two $S(k)$'s. For each point that are not labeled $S(k)$, our algorithm lists all lines that pass it. The set of all lines that pass a specific point specifies a measurement clique, that is, the sets of operators that are simultaneously measurable. Examples are given in Fig.~\ref{fig:projective-plane-example}.
    }
    \label{fig:projective-plane}
\end{figure}

\begin{figure}
    \centering
    \includegraphics[width=\linewidth]{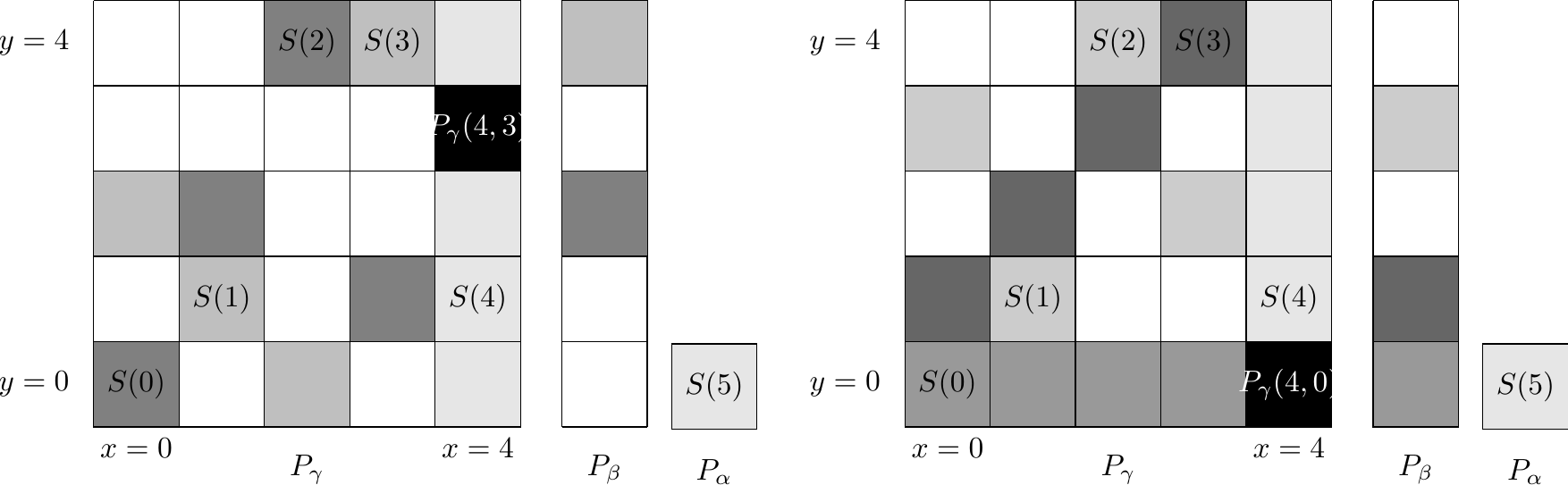}
    \caption{Example of how our algorithm works for $N=6 \: (\Pi=5)$. On the left, we show the lines that passes $P_\gamma(4,3)$. On the right, we show those for $P_\gamma(4,0)$. Grids with different colors correspond to distinct lines.
    (Left) 
    The darkest gray line passes $S(0)$ and $S(2)$, the next darkest does $S(1)$ and $S(3)$, and the lightest does $S(4)$ and $S(5)$. Therefore, the clique corresponding to $P_\gamma(4,3)$ is $C=\{ (0,2), (1,3), (4,5) \}$ which is colored red in Fig.~\ref{fig:graphG}. This clique corresponds to a set of operators $\{A_{02,\sigma}, A_{13,\sigma}, A_{45,\sigma}\}$ that are simultaneously measurable.
    (Right) 
    The darkest gray line passes $S(3)$ only.
    The next darkest does $S(0)$ only. 
    The next one does $S(1)$ and $S(2)$, and the lightest does $S(4)$ and $S(5)$. Therefore, the clique corresponding to $P_\gamma(4,3)$ is $C=\{ (0,0), (1,2), (3,3), (4,5) \}$ which is colored blue in Fig.~\ref{fig:graphG}. This clique corresponds to a set of operators $\{A_{00,\sigma}, A_{12,\sigma}, A_{33,\sigma}, A_{45,\sigma}\}$ that are simultaneously measurable.
    }
    \label{fig:projective-plane-example}
\end{figure}

\textbf{Definition of measurement clique.}
Now that we have a set of pairs of integers $\{ C_k \}_{k=1}^{\Pi^2} $ by using the finite projective plane, which satisfies Eq.~\eqref{eq: def of J_i's} and the conditions described there,
the measurement clique to evaluate all the terms $\ev{A_{pq,\sigma}A_{rs,\sigma}}$ and $\ev{n_{p\sigma}A_{rs,\sigma}}$ for both $\sigma=\uparrow, \downarrow$
can be explicitly constructed by taking $J_k = C_k$ in Eq.~\eqref{eq: def of U2 same}.
The number of the measurement cliques is $\Pi^2 = (N-1)^2$.

\subsubsection{Summary for measurement cliques}
To summarize, the measurement cliques we need to evaluate all the terms in the Hamiltonian [Eq.~\eqref{eq: mol Hamiltonian}] are
\begin{equation}
\label{eq: final cliques}
 \mathbb{U} = \{ \mathcal{C}^{\mathrm{part}}\} \cup \mathbb{U}^{[1]} \cup \mathbb{U}^{[2,\mathrm{diff}]} \cup \mathbb{U}^{[2,\mathrm{same}]}.
\end{equation}
The number of the total measurement cliques is $1 + 2(N-1) + (N-1)^2 + (N-1)^2 = 2N^2 -2N + 1 = 2N^2 + O(N)$.
The result is summarized in Tab. \ref{tab:cliques}.

Note again that, when $N=\Pi^K+1$ for a prime $\Pi$ does not hold, we can take the smallest $\Pi^K$ to construct $\mathbb{U}^{[2,\mathrm{same}]}$.
This strategy, in fact, upper-bounds the number of cliques contained in $\mathbb{U}^{[2,\mathrm{same}]}$ by $N^2+O(N(\log N)^2)$ for any practical $N$. 
It is because (c.f. Cram\'er's conjecture) $\Pi_{n+1}-\Pi_n < (\ln \Pi_n)^2$ holds for $7<\Pi_n\leq 5\times 10^{16}$, where $\Pi_n$ denotes $n$th prime \cite{nyman2003new}. 
This guarantees that, for a given (practical) $N$, we can choose a prime $\Pi>N$ such that $\Pi=N+O((\log N)^2)$ and run the algorithm to obtain $\Pi^2 = N^2+O(N(\log(N))^2)$ cliques.

\subsection{Quantum circuits for measuring operators in cliques}
After constructing the measurement cliques, we consider how to measure all the operators $A_{pq,\sigma}$ contained in each measurement clique simultaneously.
Strategies for measuring a single $A_{pq,\sigma}$ on a quantum computer depends strongly on the choice of the fermion-to-qubit mapping.
However, typical mappings (e.g., Jordan-Winger, parity, or Bravyi-Kitaev; see also~\cite{Steudtner_2018}) transform fermionic states with the fixed particle number, $\ket{f_{0\uparrow},\cdots,f_{N-1\uparrow},f_{0\downarrow},\cdots,f_{N-1\downarrow}} := \prod_{p,\sigma} (a_{p\sigma}^\dagger)^{f_{p\sigma}} \ket{\mathrm{vac}}$ with $f_{p\sigma}=0,1$ and $\ket{\mathrm{vac}}$ being the fermionic vaccum state, into a computational basis state of qubits and thus a particle number operator $n_{p\sigma}$ of fermions into a projection operator on computational basis states.
For such cases, measurement of $A_{pq,\sigma}$ reduces to a computational basis measurement after applying a unitary $U_{pq,\sigma}$ such that $U_{pq,\sigma}^\dagger A_{pq,\sigma}U_{pq,\sigma}$ can be written by the number operators.

In the language of fermions, $U_{pq,\sigma}$ can, for example, be
\begin{align}\label{eq:diagonalizing-unitary-example}
    U_{pq,\sigma} = \exp\left[-\frac{\pi}{4} (a^\dagger_{p\sigma}a_{q\sigma}-a^\dagger_{q\sigma}a_{p\sigma})\right]
\end{align}
which leads to
\begin{align}
    U_{pq,\sigma}^\dagger A_{pq,\sigma} U_{pq,\sigma} = n_{p\sigma} - n_{q\sigma}.
\end{align}
The challenge is how to efficiently implement such $U_{pq,\sigma}$ for all operators in a clique on a quantum computer with possibly limited connectivity.
One choice is to use the product of Eq.~ \eqref{eq:diagonalizing-unitary-example} for all $A_{pq,\sigma}$ in the clique.
The overall unitary in this case becomes a spin-conserving fermionic orbital rotation.
It is known that such a rotation can be performed using a depth-$(2N-3)$ quantum circuit on linearly connected qubits under the Jordan-Wigner encoding~\cite{Kivlichan2018}. 
This indeed is an efficient strategy, but below we will present even more efficient strategy which uses parallel application of nearest-neighbor fermionic swap gates for at most $N$ times.
Our construction is essentially parallel to that of BBO~\cite{BBOalgo}.

Suppose that we want to measure $m$ operators in a clique,
\begin{equation}
\label{eq: cliques to measure}
    \mathcal{C} = \{ A_{k_0 k_1,\sigma_0}, \cdots, A_{k_{2m-2} k_{2m-1}, \sigma_{m-1}} \},
\end{equation}
The measurement cliques other than $\mathcal{C}^{\mathrm{part}}$ can be written in this form.
The operators in $\mathcal{C}$ can always be reordered in up-then-down order as,
\begin{equation}
\label{eq: cliques to measure reordered}
    \mathcal{C} = \{ A_{k_0 k_1,\uparrow}, \cdots, A_{k_{2m_\uparrow-2} k_{2m_{\uparrow}-1},\uparrow}, A_{k_{2m_\uparrow} k_{2m_{\uparrow}+1},\downarrow},\cdots,A_{k_{2m-2} k_{2m-1}, \downarrow} \},
\end{equation}
where $m_{\sigma}$ is the number of $\sigma$-spin operators in the clique.
For simplicity, we explain the algorithm for the cases where $\mathcal{C}$ contains only $A_{kk',\sigma}$ with $k\neq k'$.
The extension of the algorithm to the cases with $A_{kk,\sigma}$ is straightforward.

Our strategy for measuring the operators in the clique $\mathcal{C}$ consists of two main steps.
\begin{enumerate}
 \item Apply nearest-neighbor fermionic swap gates~[Eq.~\eqref{eq: fswap}] to sort the indices of the operators in Eq.~\eqref{eq: cliques to measure reordered} into 
 \begin{equation}\label{eq: sorted cliques}
     \{A_{01,\uparrow},\cdots,A_{2m_\uparrow-2, 2m_{\uparrow}-1,\uparrow}, A_{01,\downarrow},\cdots,A_{2m_\downarrow-2, 2m_{\downarrow}-1,\downarrow}\}.
 \end{equation}
 The total number of the fermionic swap gates for this sorting is at most $N^2$ and the depth (counting a parallel fermionic swaps as one layer) is at most $N$.
 \item Transform the operators in Eq.~\eqref{eq: sorted cliques} to number operators.
 For example, this can be done by the parallel application of orbital rotations given in Eq. \eqref{eq:diagonalizing-unitary-example}.
However, as we will see later, it is not the only choice.
 By inspecting the concrete forms of $A_{j,j+1,\sigma}$ under specific fermion-qubit mappings, we can design simpler quantum circuits for diagonalizing them.
 We find that the Bell measurement circuit suffices our purpose for Jordan-Wigner mapping and a single layer of Hadamard gates does for parity and Bravyi-Kitaev mappings.
\end{enumerate}
We explain these two steps in order.

\subsubsection{Sorting operators in the clique by fermionic swaps}
We want to find a unitary $U^{\mathrm{swap}}$ such that
\begin{equation}
    (U^{\mathrm{swap}})^\dagger A_{k_ik_j,\sigma} U^{\mathrm{swap}} = A_{ij,\sigma},
\end{equation}
which is equivalent to the transformation from Eq.~\eqref{eq: cliques to measure reordered} to Eq.~\eqref{eq: sorted cliques}.
To this end, we utilize the fermionic swap gate,
\begin{equation}
\label{eq: fswap}
 f^{\mathrm{swap}}_{ij,\sigma} = 1 + a^\dag_{i\sigma} a_{j\sigma} + a^\dag_{j\sigma}a_{i\sigma} - a^\dag_{i\sigma} a_{i\sigma} - a^\dag_{j\sigma} a_{j\sigma}
\end{equation}
which is a Hermitian and unitary operator, i.e., $(f^{\mathrm{swap}}_{ij,\sigma})^2=I$. 
Its action can be described as
\begin{equation}
 f^{\mathrm{swap}}_{ij,\sigma} a_{i\sigma} f^{\mathrm{swap}}_{ij,\sigma} = a_{j\sigma}, \:
 f^{\mathrm{swap}}_{ij,\sigma} a_{j\sigma} f^{\mathrm{swap}}_{ij,\sigma} = a_{i\sigma}, \:
 f^{\mathrm{swap}}_{ij,\sigma} a_{k\tau} f^{\mathrm{swap}}_{ij,\sigma} = a_{k\sigma} \: (k \neq i,j~\lor~\sigma \neq \tau).
\end{equation}
We construct $U^{\mathrm{swap}}$ as a product of $M_{\uparrow}+M_{\downarrow}$ fermionic swap gates $U^{\mathrm{swap}} = U^{\mathrm{swap}}_\uparrow U^{\mathrm{swap}}_\downarrow$ where $U^{\mathrm{swap}}_\uparrow = f^{\mathrm{swap}}_{i_1j_1,\uparrow} \cdots f^{\mathrm{swap}}_{i_{M_\uparrow}j_{M_\uparrow},\uparrow}$ and $U^{\mathrm{swap}}_\downarrow = f^{\mathrm{swap}}_{i_1j_1,\downarrow} \cdots f^{\mathrm{swap}}_{i_{M_\downarrow}j_{M_\downarrow},\downarrow}$.
$U^{\mathrm{swap}}_\uparrow$ and $U^{\mathrm{swap}}_\downarrow$ can be constructed exactly in the same manner, so we consider how to construct $U^{\mathrm{swap}}_\uparrow$ below.

Define the following integers $p_l$ for $l=0,1,\cdots,N-1$ as
\begin{equation}
 p_l =
 \begin{cases}
   m & (\text{if there is $m\leq 2m_\uparrow-1$ such that $k_m=l$})\\
   - 1 & (\text{otherwise})
 \end{cases}.
\end{equation}
which indicates the position of the integer $l$ in the sequence $(k_0, k_1, ...,k_{2m_\uparrow-1})$. $p_l=-1$ corresponds to the case where $l$ is not found in the sequence.
Note that $p_l$ is well-defined and there is one-to-one correspondence between $(k_0, k_1, ...,k_{2m_\uparrow-1})$ and $(p_0, p_1, ...,p_{N-1})$ because we assume $k_0, k_1, \cdots, k_{2m_\uparrow-1}$ are mutually different. 
Applying fermionic swap gate $f_{ij,\uparrow}^\mathrm{swap}$ to the operators in the clique $\mathcal{C}$ invokes a change in $p_l$ as an exchange between $p_i$ and $p_j$ and corresponding change in $k$.
Therefore, the problem reduces to finding a sequence of swaps for the elements of $p_l$ which results in
\begin{equation}
p'_0 = 0, p'_1 = 1, \cdots, p'_{2m_\uparrow-1} = 2m_\uparrow-1,
p'_{2m_\uparrow} = \cdots = p'_{N-1} = -1,    
\end{equation}
because this $p'_l$ corresponds to
\begin{equation}
 k'_0 = 0, k'_1 = 1, \cdots, k'_{2m_\uparrow-1} = 2m_\uparrow-1.
\end{equation}

This problem can be solved by applying the odd-even sort algorithm~\cite{Habermann1972} to $p_l$.
This algorithm first compares the odd adjacent pairs $(p_1,p_2), \cdots, (p_{N-3}, p_{N-2})$ and exchanges two components in each pair if necessary.
Second, it does the same for the even adjacent pairs $(p_0,p_1), \cdots, (p_{N-2}, p_{N-1})$.
Each of these steps requires $N/2$ operations at most but they can be executed in parallel.
Iterating this process, it successfully sorts the integer with $N$ steps (i.e., with $N/2$ even and $N/2$ odd steps) at most.
Hence we can construct $U^{\mathrm{swap}}_\uparrow$ by interpreting exchanges in the sort algorithm as the fermionic swap gates.
$U^{\mathrm{swap}}_\downarrow$ can be constructed in the same manner and can be executed in parallel with $U^{\mathrm{swap}}_\uparrow$.
Therefore, $U^{\mathrm{swap}}$ in this construction consists of at most $N$ layers of at most $N$ parallel fermionic swap gates.

The actual representation of the femionic swap gate $f_{ij,\sigma}^\mathrm{swap}$ on qubits depends on the fermion-qubit mapping, but it becomes typically local when the swap is occurred at adjacent sites, $i=l, j=l+1$ for some integer $l$.
To see this, let us first define fermimonic operators $\tilde{a}_{j}$ with $j=0,1,\cdots,2N-1$ labeling both the orbital and the spin simultaneously by
\begin{equation}
 \tilde{a}_{j} :=
\begin{cases}
 a_{j, \uparrow} & (0 \leq j \leq N-1)\\ 
 a_{j-N, \downarrow} & (N \leq j \leq 2N-1)
\end{cases}.
\end{equation}
This is the so-called up-then-down convention for ordering spin-orbitals.
The Jordan-Wigner \cite{Jordan1928}, parity \cite{BRAVYI2002210}, and Bravyi-Kitaev \cite{BRAVYI2002210, Seeley2012, PhysRevA.95.032332} transformation respectively maps $\tilde{a}_{k}$ to
\begin{align}\label{eq:transforms}
    \tilde{a}_j \to \left\{\begin{array}{ll}
        (X_j+iY_j)Z_{j-1}\cdots Z_{0}/2 & \text{(Jordan-Wigner)} \\
        \left( Z_{j-1} X_j  + i Y_j \right) X_{j+1} \cdots X_{2N}/2 & \text{(parity)} \\
        \left( X_{U(j)} Z_{P(j)}  + i X_{U(j)\setminus\{j\}} Y_j Z_{P(j+1)\setminus \{j\}} \right)/2 & \text{(Bravyi-Kitaev)}
    \end{array}
    \right.,
\end{align}
where $U(j)$ and $P(j)$ are set of indices defined in Appendix \ref{appsec: meas. circ for PT-BK}, and the notation like $X_{U(j)}$ means $\prod_{l\in U(j)} X_{l}$.
From the definition of $f_{i,i+1,\sigma}^\mathrm{swap}$ (Eq.~\eqref{eq: fswap}), $f_{i,i+1,\sigma}^\mathrm{swap}$ becomes a two- and three-qubit operator in the Jordan-Wigner and parity mappings, respectively, because the chain of Pauli $Z$ operators and $X$ operators cancels out.
For the Bravyi-Kitaev mapping, it becomes a $O(\log N)$-qubit operator since the number of elements in the sets $U(j)$ and $P(j)$ is $O(\log N)$.

\subsubsection{Diagonalizing $A_{j,j+1,\sigma}$ in specific fermion-to-qubit mappings
\label{subsubsec: diagonalization circuit}}
Let $\tilde{A}_{k k'} := \tilde{a}_k^\dag \tilde{a}_{k'} + \tilde{a}_{k'}^\dag \tilde{a}_{k}$.
Note that the sorted clique (Eq.~\eqref{eq: sorted cliques}) only contains $\tilde{A}_{2j,2j+1}$ and therefore we only need to consider how to diagonalize them.
Under three different fermion-to-qubit mappings, $\tilde{A}_{2j,2j+1}$ becomes
\begin{equation}
\label{eq: JW hopping term}
 \tilde{A}_{2l,2l+1} \to \left\{
 \begin{array}{ll}
     (X_{2l} X_{2l+1} + Y_{2l} Y_{2l+1})/2 & \text{(Jordan-Wigner)} \\
     \left( - Z_{2l-1} X_{2l} Z_{2l+1} + X_{2l} \right)/2 & \text{(parity)} \\
     \left( - Z_{2l+1}X_{2l}Z_{P(2l+2)\setminus\{2l+1\}}Z_{P(2l)} + X_{2l} \right)/2 & \text{(Bravyi-Kitaev)}
 \end{array}
 \right..
\end{equation}
It is easy to see that, for the Jordan-Wigner mapping, we can diagonalize it by applying the Bell measurement circuit to qubits $2l$ and $2l+1$.
Also, it can be seen that the set $P(2l+2)\setminus\{2l+1\}$ and $P(2l)$ contains odd integers $i< 2l$ by inspecting Algorithm \ref{alg:P} in Appendix \ref{appsec: meas. circ for PT-BK}.
This means that, for the parity and Bravyi-Kitaev mappings, the Hadamard gate to all even-site qubits suffices our purpose.
The total number of local gates are $O(N)$ and the depth is $O(1)$ in all of three cases.

\section{Numerical simulation for hydrogen chains \label{sec: numerics}}
In this section, we apply our method to the molecular Hamiltonians for hydrogen chains.
We compute the number of measurement cliques for these Hamiltonians and estimate the number of measurement shots to realize the standard deviation of the energy typically required by precise quantum chemistry calculations.

\subsection{Setup}
We consider hydrogen chains $\ce{H4},\ce{H6},\ce{H8},\cdots,\ce{H30}$ where the atomic distance between two adjacent hydrogens is 1~\AA.
We employ the STO-3G basis set to perform the Hartree-Fock calculation with the numerical library PySCF~\cite{pyscf_1,pyscf_2}, and the Hartree-Fock orbitals are used to construct the molecular Hamiltonian [Eq.~\eqref{eq: original mol Ham}].
Note that the number of orbitals $N$ is the same as the number of hydrogens with this setup (e.g.,  $N=4$ for \ce{H4}).
The Jordan-Wigner transformation~\cite{Jordan1928} is used to map the fermionic Hamiltonian into the qubit one, implemented by the library OpenFermion~\cite{openfermion}.

We compare our method with two existing grouping techniques: qubit-wise commuting (QWC)~\cite{kandala2017, verteletskyi2020} and general commuting (GC) grouping~\cite{yen2020, crawford2021efficient}.
Both methods divide the terms in the Hamiltonian~\eqref{eq: original mol Ham} into simultaneously-measurable Pauli operators.
Note that the methods based on the factorization of fermionic Hamiltonians is more efficient than these \cite{huggins2021efficient}.
However, since such methods require us to apply fermionic orbital rotation operations which are non-Clifford and need more gates depending on hardware, they cannot be directly compared to our methods.
We thus omit comparison with them in this work.
For QWC grouping, the quantum circuits for measurement consist of $O(N)$ one-qubit Clifford gates with depth one.
For GC grouping, the number of two-qubit gates (CZ or CNOT, both are Clifford) for the measurement circuits is $O(N^2 / \log N)$~\cite{crawford2021efficient, Aaronson2004}.
We implemented the ``sorted insertion'' algorithm for both groupings~\cite{crawford2021efficient}.
The details of both grouping methods and their implementations can be found in Appendix~\ref{appsec:grouping-details}.

After the grouping, we estimate the number of shots required to suppress the standard deviation of the energy expectation value down to $10^{-3}$ Hartree.
This value ($10^{-3}$ Hartree) is comparable to the so-called chemical accuracy that is typically targeted at in precise quantum chemistry calculations.
The concrete procedures to estimate the number of shots are in Appendix \ref{appsec: number of shots}.

\subsection{Numerical results}
We performed grouping of the hydrogen chain Hamiltonians by our method up to $N=30$, while we did it by QWC and GC grouping up to $N=20$.
This is because the classical computational time is much longer for the two (QWC and GC) methods as illustrated in Fig.~\ref{fig:time}.
Our implementations of grouping methods are pure-Python strongly depending on OpenFermion \cite{openfermion}, works as a single-thread, and is not optimized for speed.
The CPU used in this experiment is AMD EPYC 7252.
Figure~\ref{fig:time} shows that our method is highly efficient compared to the existing methods.

Numerical results are shown in Fig.~\ref{fig:numerics}.
In Fig.~\ref{fig:numerics}(a), we compare the number of groups (measurement cliques) obtained by each grouping method.
To see the scaling of the number of groups, we also fit the results by the function $a N^{b}$, where $a,b$ are the fitting parameters, by the linear regression of the data points with $N\geq 10$ on the log-log plot. 
This fitting yields $b= 2.03(7)$ for our method, $b= 2.34(3)$ for GC grouping, and $b= 4.017(3)$ for QWC grouping.
We observed that GC grouping exhibits the slightly smaller number of groups than that of our method for $N \leq 20$.
However, the predicted scaling exponent of GC grouping is larger than that of our method. 
We expect that our method beats GC grouping for larger $N$ in terms of the number of groups (see the fitting lines in the figure).
We also note that the number of groups of our method seems to follow the theoretical scaling $2N^2 - 2N + 1$ (the total number of all cliques~[Eq.~\eqref{eq: final cliques}]).


In Fig.~\ref{fig:numerics}(b), we compare the number of shots required to make the standard deviation of the energy $10^{-3}$ Hartree.
We again performed the fitting of the data by the function $a N^{b}$ in the same way as we did for the number of groups, which results in $b= 3.51(5)$ for our method, $b= 3.12(7)$ for GC grouping, and $b= 4.542(7)$ for QWC grouping.
We obeserve that GC grouping shows the smaller number of shots than that of our method.
In contrast to the number of groups, the scaling exponent is smaller in GC grouping, indicating that it would performed well than our method if we could execute GC grouping for larger $N > 20$.
This is because ``sorted insertion'' algorithm performs grouping considering the magnitudes of coefficients of Pauli operators while our method are not aware of such information. 
We stress that, however, the huge classical computational cost will prevent GC grouping to be applied for larger $N$.
Our method can be seen as a practical and effective way to group the Pauli operators in the Hamiltonian and reduce the number of shots for the desired accuracy of the energy expectation value.
It might be possible to further reduce the number of shots of our method by taking the information about coefficients into account.

\begin{figure}
    \centering
    \includegraphics[width=0.9\linewidth]{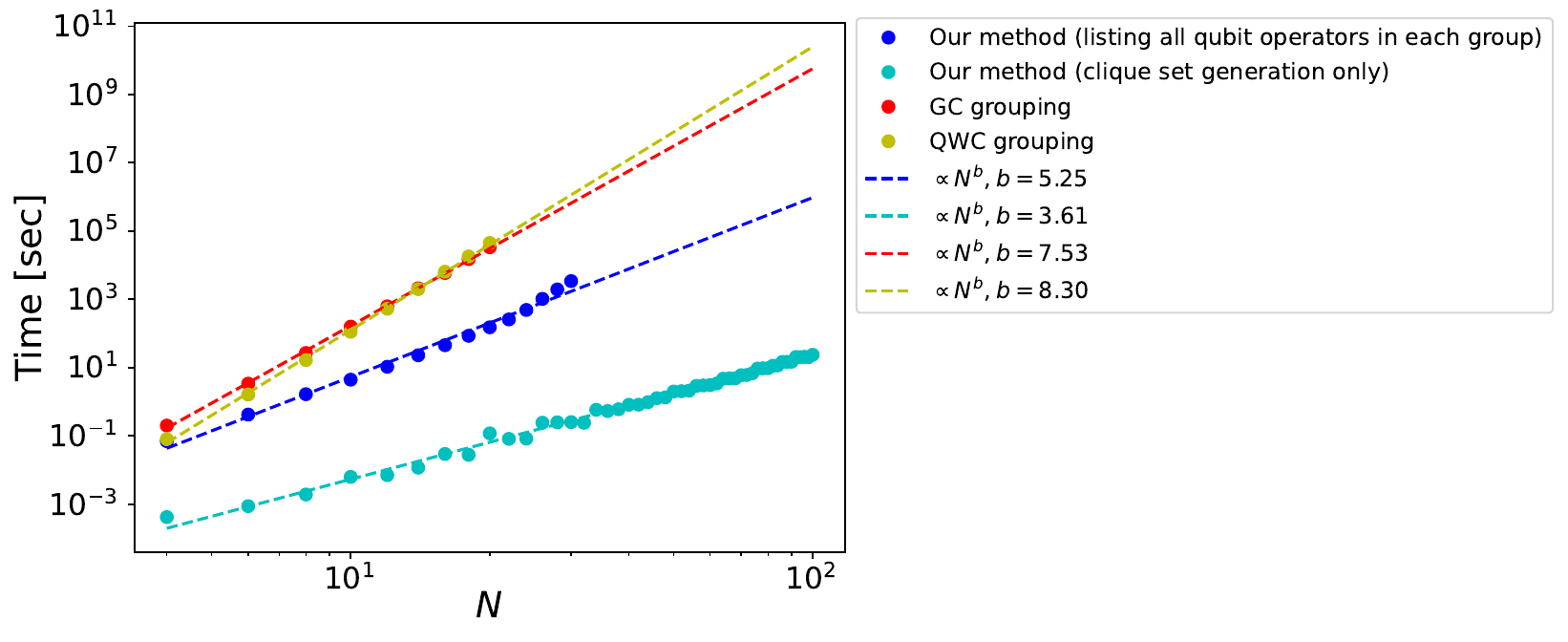}
    \caption{\label{fig:time} Computational time required for performing each grouping methods. We show two plots for our method. The blue one shows the time required for generating grouped qubit operators, that is, generating $\{G_i\}$ with $G_i = \{P^{i}_{j}\}$ where $P^{i}_j$ is a $2N$-qubit Pauli operators such that $[P^i_j, P^i_{j'}]=0$ for all $i,j,j'$. This is for a fair comparison with GC and QWC methods which generate such lists of groups. The light blue plot illustrates the time required to construct all cliques, represented by $\mathbb{U}$, defined as lists of integer pairs.}
\end{figure}

\begin{figure*}
 \includegraphics[width=\linewidth]{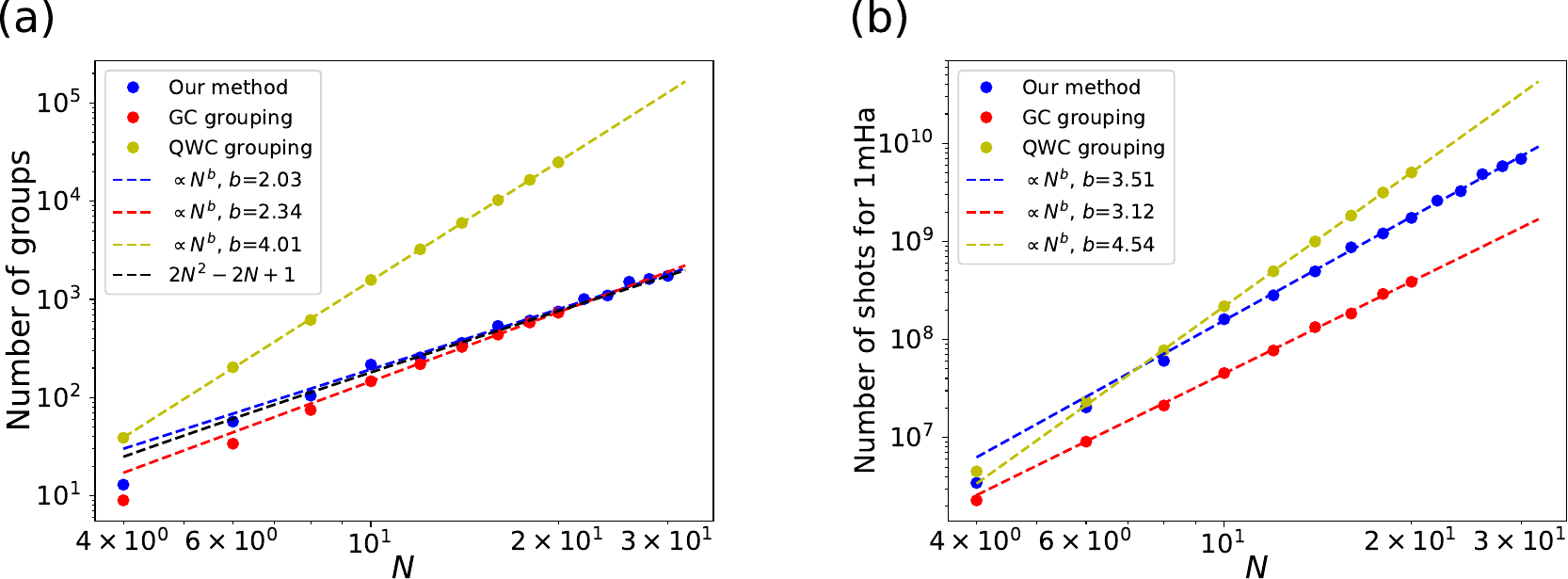}
 \caption{Numerical results for the molecular Hamiltonians for hydrogen chains $\ce{H}_N$.
 (a) The number of groups for each grouping method. Fitting lines for the data points with $N\geq 10$ are also shown.
 The Black dotted line denotes the theoretical scaling for our method, $2N^2-2N+1$.
 (b) The number of shots required to achieve the standard deviation of the energy lower than $10^{-3}$ Hartree, estimated by Haar random averaging (see main text).
 \label{fig:numerics}
 }
\end{figure*}

\section{Discussion \label{sec: discussion}}
We here discuss several points of our study and compare our result with the previous studies.

\subsection{Optimality of the obtained solution \label{sec:optimality}}
Our construction of cliques for the graph $G$ in Sec.~\ref{subsubsec: clique for same spin} is almost optimal from the viewpoint of the minimal edge clique cover problem.
To show this, we give a lower bound to the number of cliques to cover edges of $G$.
Consider a subgraph $G'$ where we remove all of vertices in $V_1$ and corresponding edges in $G$.
The number of cliques to cover all edges in $G'$ is clearly less than or equal to that of $G$, and it follows that a lower bound for the number of cliques for the edge cover of $G'$ is also a lower bound for $G$.
First, we claim that a clique in $G'$ can have at most $N/2$ vertices and thus $\frac{N}{2}(\frac{N}{2}-1)/2$ edges.
This is because there is an edge between two vertices $(p,q)$ and $(r,s)$ in $V_2$ if and only if $p,q,r,s$ are mutually different, and thus a clique corresponds to a pairing of $N$ integers.
Second, the number of edges in $G'$ is $N(N-1)(N-2)(N-3)/8$.
To see this, for every tuple of four integers $(p,q,r,s)$ such that $p<q<r<s$, there are three edges each of which connects the vertices $\{(p, q), (r, s)\}$, $\{(p, r), (q, s)\}$, and $\{(p, s), (q, r)\}$ in a graph $G'$.
The number of combinations of such integers is $\binom{N}{4}$, and therefore the statement follows.
Combining these two facts, we conclude that the number of cliques to cover all edges in $G'$ must be at least $[N(N-1)(N-2)(N-3)/8]/[\frac{N}{2}(\frac{N}{2}-1)/2] = (N-1)(N-3)$.
Therefore, our algorithm, where the number of cliques is $(N-1)^2$, provides an optimal solution to the minimal edge clique cover problem of $G$ up to the subleading order terms.

\subsection{Classical computational cost for constructing measurement cliques}
The classical computational cost for constructing the measurement cliques in our method is $O(N^3)$, which comes from creating a set $C_k$ in Eq.~\eqref{eq: def of c_k clique}.
For each $p_k \in \mathcal{K}$ (the number of $k$ is $\Pi^2=O(N^2)$), we add a vertex $v$ on $G$ to $\mathcal{C}_k$ by inspecting the line passing $p_k$ and $S(i)$ (the total number of $i$ is $\Pi+1 = N$): when the line passes $S(i)$ and $S(j)$, we add $(i,j)$ or $(j,i)$ to $\mathcal{C}_k$, and when the line passes only $S(i)$, we add $(i,i)$ to $\mathcal{C}_k$.
The exponent observed in Fig.~\ref{fig:time} is slightly worse than this theoretical scaling, which we attribute to additional logarithmic costs (the overhead happening when $N\neq \Pi+1$, memory access, integer arithmetics, etc.).
Note that this computational cost is optimal when we seek to find $O(N^2)$ measurement cliques, because the output of the algorithm must be of size $O(N^3)$ ($O(N^2)$ cliques with $O(N)$ vertices) by the discussion in Sec.~\ref{sec:optimality}.
BBO's paper \cite{BBOalgo} does not explicitly give the classical computational cost, but the size of the output of their algorithm is also $O(N^3)$ and thus its classical computational cost cannot be less than $O(N^3)$.

This $O(N^3)$ scaling of the classical computational cost is much smaller than the methods to create the measurement clique consisting of the terms of the Hamiltonian \textit{as is}~\cite{mcclean2016theory,kandala2017,jena2019pauli,izmaylov2020,verteletskyi2020,yen2020,gokhale2020ON3,hamamura2020,zhao2020}.
In those methods, one should calculate the commutativity or anti-commutativity of pairs of the $O(N^4)$ terms in the Hamiltonian, which results in as huge as $O(N^8)$ classical computational cost.
The target problem of the application of quantum computers in quantum chemistry lies for more than $N \sim 50$, and such scaling can be problematic in practice.

\subsection{Comparison of the number of measurement cliques in our algorithm with that of previous studies
\label{subsec: discussion on comparison}}
Our algorithm can determine the expectation value of the quantum chemistry Hamiltonian with $O(N^2)$ measurement cliques.
Again, this is advantageous over the conventional methods to create the measurement cliques consisting of the terms of the Hamiltonian~\cite{mcclean2016theory,kandala2017,jena2019pauli,izmaylov2020,verteletskyi2020,yen2020,gokhale2020ON3,hamamura2020,zhao2020}, in which
the total number of the measurement cliques is $O(N^3)$~\cite{gokhale2020ON3, zhao2020}.

 Let us compare our algorithm with BBO algorithm~\cite{BBOalgo}.
 The fermionic 1,2-RDMs are defined as
 \begin{equation}
 \label{eq: def. of RDM}
 \rho^{[1]}_{pq} = \ev{\tilde{a}_p^\dag \tilde{a}_q}, \rho^{[2]}_{pqrs} = \ev{\tilde{a}_p^\dag \tilde{a}_q^\dag \tilde{a}_r \tilde{a}_s}.
 \end{equation}
 for mutually distinct integers $p,q,r,s=0,\cdots,2N-1$.
 BBO algorithm can determine the fermionic 2-RDM with $O(N^2)$ measurement clique, so the scaling is the same as ours.
 It is based on the Majorana representation of the fermions and constructs measurement cliques of the Majorana operators by considering sets of integers similar to Eq.~\eqref{eq: def of J_i's}.
 That is, the products of two Majorana operators, rather than the products of four Majorana operators, consist the measurement cliques to evaluate the fermionic 2-RDM (represented as products of four Majorana operators). 
 The authors (BBO) \cite{BBOalgo, private} claimed that the number of measurement cliques to evaluate the fermionic 2-RDM is
\begin{align}
    4N^2 \left(\frac{16}{3}4^{-N_\mathrm{sym}} + 2^{1-N_\mathrm{sym}}\right),
\end{align}
where $N_\mathrm{sym}$ is the number of Pauli operators $W$ such that $[H,W]=0$\footnote{The expression differs by the factor of $4$ from the BBO's original paper \cite{BBOalgo}. It comes from the different definition of $N$, that is, we define $N$ as the number of molecular orbitals while BBO defines $N$ as the number of spin-orbitals.}.
For general Hamiltonians for quantum chemistry calculation [Eq.~\eqref{eq: original mol Ham}], typically we have two such symmetries: the numbers of spin-up and spin-down electrons~\cite{Bravyi2017}.
Putting $N_\mr{sym}=2$ in the above equation yields
\begin{equation}
     4N^2 \left(\frac{16}{3}4^{-2} + 2^{-1}\right) = \frac{10}{3}N^2,
\end{equation}
which is slightly larger than our method, $2N^2$.
We note that our method determines only the ``symmetric part" of the fermionic 2-RDM,
\begin{equation}
\ev{\Tilde{A}_{pq}\Tilde{A}_{rs}}
 = \rho^{[2]}_{pqrs} + \rho^{[2]}_{qprs} + \rho^{[2]}_{pqsr} + \rho^{[2]}_{qpsr},
\end{equation}
which is sufficient to determine the expectation value of the molecular Hamiltonian [Eq.~\eqref{eq: original mol Ham}] with the symmetries~\eqref{eq: symm. of coef}.
In contrast, BBO algorithm evaluates all elements of the fermionic 2-RDM.

\section{Summary and outlook \label{sec: summary}}
In this study, we have proposed a measurement scheme for general molecular Hamiltonians in quantum chemistry~[Eq.~\eqref{eq: original mol Ham}].
We have used the symmetries of the coefficients in the Hamiltonian and reduced the problem of evaluating the expectation value of the Hamiltonian into the evaluation of the terms $\ev{A_{pq,\sigma}}$ and $\ev{A_{pq,\sigma} A_{rs,\tau}}$.
We have classified all the possible terms appearing in the Hamiltonian and proposed a measurement clique for each type of the terms.
Especially, the measurement cliques for the terms $\ev{A_{pq,\sigma} A_{rs,\sigma}}$ with mutually-different $p,q,r,s$ have been constructed by finding the edge clique cover of the specific graph using the novel method based on the finite projective plane.
The total number of the distinct measurement cliques (or measurement circuits) is $2N^2 + O(N)$ with $2N$ being the number of spin orbitals (fermions), which exhibits better scaling compared with the previous method (BBO algorithm).
We have shown explicit quantum circuits to measure operators in the measurement cliques, and those circuits consist of $O(N^2)$ one- or two-qubit gates with depth $O(N)$ for Jordan-Wigner and parity mappings.
We have also performed numerical simulation for molecular Hamiltonians of hydrogen chains and evaluated the number of groups of simultaneously-measurable operators generated by our method as well as the number of measurement shots required to estimate the energy expectation values with sufficient accuracy. 
Evaluation of the expectation value of the Hamiltonian is one of the most fundamental subroutines among various algorithms for the applications of quantum computers to quantum chemistry, so our method can be utilized in broad studies on quantum algorithms.

As future work, it is intriguing to apply the proposed method to actual Hamiltonians in quantum chemistry and numerically investigate a standard deviation of the estimated energy expectation value with using our measurement cliques.
Another interesting direction is to generalize our method using the finite projective plane to higher order fermionic RDMs, such as $\ev{\Tilde{a}_p^\dag \Tilde{a}_q^\dag \Tilde{a}_r^\dag a_s a_t a_u}$.

\begin{acknowledgements}
KM is supported by JST PRESTO Grant No. JPMJPR2019.
This work is supported by MEXT Quantum Leap Flagship Program (MEXTQLEAP) Grant No. JPMXS0118067394 and JPMXS0120319794. We also acknowledge support from JST COI-NEXT program Grant No. JPMJPF2014.
This work initiated at QPARC Challenge 2022 (\url{https://github.com/QunaSys/QPARC-Challenge-2022}), an international hackathon for quantum information and chemistry, run by QunaSys Inc.
KM thanks Wataru Mizukami for fruitful discussions.
\end{acknowledgements}

\appendix

\section{Proof for several properties of points $\mathcal{S}$ on finite projective plane
\label{appsec: proofs for plane}}
In this Appendix, we prove several properties of points $\mathcal{S}$ on the finite projective plane described in Sec.~\ref{subsubsec: clique for same spin}.

\subsection{Three points in $\mathcal{S}$ never appear on a single line \label{appsubsec:no-three-point-proof}}
We prove the following:
\begin{lemma}
Three distinct points in $S$ never appear on a single line.
\end{lemma}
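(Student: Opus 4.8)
The plan is a direct case analysis over the three families of lines of the projective plane. I would begin by recalling that $\mathcal{S}=\{S(0),\ldots,S(\Pi)\}$ with $S(\Pi)=P_\alpha$ and $S(k)=P_\gamma(k,k^2\bmod\Pi)$ for $0\le k\le\Pi-1$, so that geometrically $\mathcal{S}$ consists of the affine points on the ``parabola'' $y=x^2$ together with a single point at infinity. The goal is to show that every line in $\mathcal{L}$ meets $\mathcal{S}$ in at most two points, which immediately yields the claim that no line contains three distinct points of $\mathcal{S}$.

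For $L_\alpha=\{P_\alpha,P_\beta(0),\ldots,P_\beta(\Pi-1)\}$, I would note that it contains no point of the form $P_\gamma(\cdot,\cdot)$, so it meets $\mathcal{S}$ only in $S(\Pi)=P_\alpha$. For $L_\beta(i)=\{P_\alpha,P_\gamma(i,0),\ldots,P_\gamma(i,\Pi-1)\}$, the point $S(k)=P_\gamma(k,\cdot)$ has first coordinate $k$, so $S(k)\in L_\beta(i)$ forces $k=i$; together with $S(\Pi)=P_\alpha$ this gives at most two points of $\mathcal{S}$ on $L_\beta(i)$. Both of these are immediate from the explicit description of the lines.

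The only case requiring an actual argument is $L_\gamma(i,j)=\{P_\beta(i)\}\cup\{P_\gamma(k,(ik+j)\bmod\Pi)\}_{k=0}^{\Pi-1}$. This line does not contain $P_\alpha$, so $S(\Pi)\notin L_\gamma(i,j)$, and it contains $P_\beta(i)\notin\mathcal{S}$; the point $S(k)=P_\gamma(k,k^2\bmod\Pi)$ lies on it if and only if $k^2\equiv ik+j\pmod\Pi$, i.e. if and only if $k$ is a root of the quadratic polynomial $X^2-iX-j$ over $\mathbb{Z}_\Pi$. Since $\Pi$ is prime, $\mathbb{Z}_\Pi$ is a field and a nonzero degree-two polynomial has at most two roots, so $L_\gamma(i,j)$ meets $\mathcal{S}$ in at most two points. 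I expect this quadratic root-counting to be the crux of the proof; everything else is bookkeeping with the incidence structure. Finally, I would remark that for general order $\Pi^K$ the same computation works verbatim with $\mathbb{Z}_\Pi$ replaced by $\mathbb{F}_{\Pi^K}$, the content being essentially that a line meets a conic in at most two points.
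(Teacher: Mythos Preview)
Your proposal is correct and follows essentially the same approach as the paper: a case analysis over the three line families $L_\alpha$, $L_\beta(i)$, $L_\gamma(i,j)$, with the only nontrivial case being $L_\gamma(i,j)$, where membership of $S(k)$ reduces to the quadratic congruence $k^2\equiv ik+j\pmod\Pi$. The only cosmetic difference is that the paper argues the $L_\gamma$ case by explicitly subtracting two instances of this congruence to force $a+b\equiv i$ and $a+c\equiv i$, hence $b=c$, whereas you invoke the equivalent fact that a nonzero quadratic over the field $\mathbb{Z}_\Pi$ has at most two roots; these are the same argument in slightly different dress.
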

\begin{proof}
From the definition of $L_\alpha$, there is a single point $S(\Pi)=P_\alpha$ on $L_\alpha$. Similarly, $L_\beta(i)$ has only two points $S(\Pi)=P_\alpha$ and $S(i)=P_\gamma(i,i^2\mod \Pi)$ in $\mathcal{S}$.
For $L_\gamma(i,j)$, suppose that $L_\gamma(i,j)$ has three distinct points $S(a)$, $S(b)$ and $S(c)$ in $S$.
This implies
\begin{align}
 a^2 \equiv ia+j \mod \Pi, \:
 b^2 \equiv ib+j \mod \Pi.
\end{align}
Subtracting them results in
\begin{align}
 a^2-b^2 \equiv i(a-b) \mod \Pi. 
\end{align}
By using the assumption $a\neq b$, we reach
\begin{align}\label{eq:abi}
    a+b \equiv i \mod \Pi.
\end{align}
Similarly, we can show 
\begin{align}\label{eq:aci}
    a+c \equiv i \mod \Pi.
\end{align}
by using the assumption $a \neq c$.
These two equations~\eqref{eq:abi}\eqref{eq:aci} lead to $b=c$, but this contradicts with the assumption that three points $S(a), S(b)$ and $S(c)$ are distinct.
\end{proof}

\subsection{There is only one line that passes a point in $\mathcal{S}$
\label{appsubsec: exist-one-line}}
We prove the following property,
\begin{lemma}
 For any $S(k) \in \mathcal{S}$, there is only one line that passes $S(k)$ and does not pass any other points in $\mathcal{S}$.
\end{lemma}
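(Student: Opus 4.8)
The plan is to prove the lemma by a short counting argument over the lines through the point $S(k)$, using as the only nontrivial input the fact already established in Appendix~\ref{appsubsec:no-three-point-proof} that no three distinct points of $\mathcal{S}$ are collinear.

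First I would recall two standard properties of a finite projective plane of order $\Pi$: every point lies on exactly $\Pi+1$ lines, and any two distinct points are joined by a unique line. Fix $S(k)\in\mathcal{S}$. Since $\mathcal{S}=\{S(0),\dots,S(\Pi)\}$ has $\Pi+1$ points, there are exactly $\Pi$ indices $j\neq k$, and for each such $j$ I would let $\ell_j$ be the unique line through $S(k)$ and $S(j)$. Each $\ell_j$ is one of the $\Pi+1$ lines through $S(k)$, and by construction it passes through the point $S(j)\in\mathcal{S}$ in addition to $S(k)$.

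The key step is to show the $\Pi$ lines $\ell_0,\dots$ (over $j\neq k$) are pairwise distinct: if $\ell_j=\ell_{j'}$ for some $j\neq j'$, then this common line would contain the three distinct points $S(k),S(j),S(j')$ of $\mathcal{S}$, contradicting Appendix~\ref{appsubsec:no-three-point-proof}. Hence these $\Pi$ distinct lines account for $\Pi$ of the $\Pi+1$ lines through $S(k)$, and each of them meets $\mathcal{S}$ in at least one point besides $S(k)$. Consequently exactly one line $L$ through $S(k)$ is left over, and $L$ cannot contain any $S(j)$ with $j\neq k$ — otherwise $L$ would coincide with $\ell_j$, which was already counted. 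This simultaneously yields existence ($L$ works) and uniqueness (any line through $S(k)$ missing all other points of $\mathcal{S}$ is forced to be $L$, since each of the remaining $\Pi$ lines hits $\mathcal{S}$ again).

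I do not anticipate a real obstacle here: the argument is a pigeonhole-style count among the $\Pi+1$ lines through a fixed point. The only place that requires care is verifying that the $\Pi$ join-lines $\ell_j$ are genuinely distinct, which is exactly where the previously proved "no three points of $\mathcal{S}$ collinear" lemma enters; everything else is bookkeeping with the incidence axioms of the projective plane.
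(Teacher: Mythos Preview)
Your counting argument is correct and complete: the key step---that the $\Pi$ secant lines $\ell_j$ are pairwise distinct---is exactly where the ``no three points of $\mathcal{S}$ collinear'' lemma is needed, and once that is in hand the pigeonhole over the $\Pi+1$ lines through $S(k)$ finishes both existence and uniqueness at once.

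The paper takes a different, more hands-on route: it splits into the cases $k=\Pi$ and $k<\Pi$, writes down explicitly all $\Pi+1$ lines through $S(k)$ in terms of the $L_\alpha, L_\beta(i), L_\gamma(i,j)$ coordinates, pairs off $\Pi$ of them with the other points of $\mathcal{S}$, and then verifies by a direct quadratic computation (reducing to $(m-k)^2\equiv 0 \bmod \Pi$) that the leftover line $L_\gamma(2k,-k^2\bmod\Pi)$ really is tangent. Your argument is shorter and conceptually cleaner, and it shows the statement holds for \emph{any} $(\Pi+1)$-arc in a projective plane of order $\Pi$, not just the specific conic $\mathcal{S}$ used here. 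The paper's approach, on the other hand, is self-contained (it does not need to invoke the earlier lemma) and has the side benefit of identifying the tangent line explicitly, though that explicit form is not actually used elsewhere in the paper.
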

\begin{proof}
For $S(\Pi) = P_\alpha$, $L_\alpha, L_\beta(0), \cdots, L_\beta(\Pi-1)$ are the only lines passing $S(\Pi)$ .
Among these lines, only the line $L_\alpha$ passes $S(\Pi)$ and does not pass any other points in $\mathcal{S}$.
For $S(k) = P_\gamma(k, k^2 \mod \Pi)$ with $k=0,\cdots,\Pi-1$, $L_\beta(k)$ and $L_\gamma(a, k^2 - ak \mod \Pi)$ for $a=0,\cdots,\Pi-1$ are the only lines passing $S(k)$.
The line $L_\beta(k)$ passes two points in $\mathcal{S}$, namely, $S(k)$ and $S(\Pi) = P_\alpha$.
For $l=0,\cdots,k-1,k+1,\cdots,\Pi-1$, the line $L_\gamma(k+l, k^2-(k+l)k \mod \Pi)$ passes the point $S(l) = P_\gamma(l, l^2 \mod \Pi)$ as well as $S(k)$.
We show that the remaining line $L_\gamma(2k, k^2 - 2k \cdot k \mod \Pi) = L_\gamma(2k, - k^2 \mod \Pi)$ do not pass the points in $\mathcal{S}$ other than $S(k)$ as follows.
Suppose that $L_\gamma(2k, - k^2 \mod \Pi)$ passes $S(m) = P_\gamma(m, m^2 \mod \Pi)$ for $S(m)\neq S(k)$.
Then we have
\begin{equation}
 m^2 \equiv 2km - k^2 \mod \Pi
 \: \Leftrightarrow \:
 (m-k)^2 \equiv 0 \mod \Pi.
\end{equation}
Since $\Pi$ is a prime, this means $m \equiv k \mod \Pi$.
This contradicts with $S(k) \neq S(m)$.
\end{proof}

\section{Measurement quantum circuits for Bravyi-Kitaev mapping
\label{appsec: meas. circ for PT-BK}}
Here, we provide further details on the Bravyi-Kitaev mapping, complementing the discussion in Sec.~\ref{subsubsec: diagonalization circuit}.
We follow the Fenwick-tree-based construction introduced in \cite{PhysRevA.95.032332}.

The Fenwick tree, also known as the binary indexed tree, is a data structure that allows us to efficiently perform range sums  of $N$-dimensional vector $\bm{x} = (x_0, x_1,...,x_{N-1})$.
Let us first represent an index $i$ of the vector as binary string: $i=\sum_{k=0}^{n} i_k 2^k$ where $n=\lfloor\log_2 N\rfloor$ and $i_k\in\{0,1\}$.
We define a least significant bit of an index $i$ as $\mathrm{LSB}(i) = i_{\tilde{k}} 2^{\tilde{k}}$ where $\tilde{k}$ is the smallest $k$ such that $i_k=1$.
Fenwick tree of $\bm{x}$ is a vector $\bm{y}=(y_0,y_1,...,y_{N-1})$ defined as,
\begin{align}\label{eq:fenwick-tree}
    y_i = \sum_{j=i+1-\mathrm{LSB}(i+1)}^{i} x_j.
\end{align}
Its structure can be illustrated as Fig.~\ref{fig:fenwick-tree}.
A range sum $\sum_{i=0}^{l-1} x_i$ for any $l>0$ can be obtained by 
summing up $O(\log(N))$ elements of $\bm{y}$.
Specifically, define a set of indices $P(l)$ by Algorithm \ref{alg:P}, which contains $O(\log(N))$ indices. Then, $\sum_{i=0}^{l-1} x_i = \sum_{i\in P(l)} y_i$. The Fenwick tree can also be updated in $O(\log(N))$ time. Specifically, consider adding a constant $c$ to $x_l$. This corresponds to adding $c$ to $y_i$ for every index $i$ in a set $U(l)$ defined by Algorithm \ref{alg:U}, which again contains $O(\log(N))$ indices.
\begin{figure}
    \centering
    \includegraphics{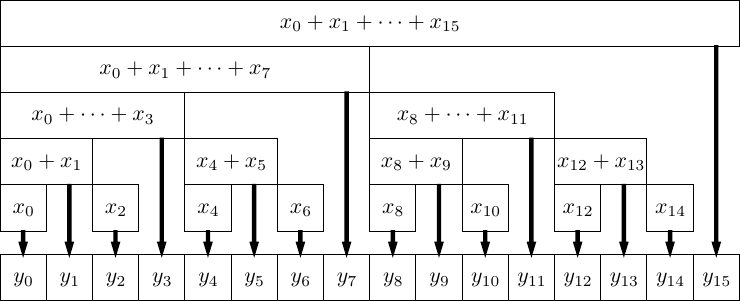}
    \caption{Illustration of Fenwick tree of size $n=16$ which is a basis of Bravyi-Kitaev mapping. An $n$-dimensional vector $\bm{x}$ can be transformed to the Fenwick tree $\bm{y}$ by taking appropriate range sums defined in Eq.~\eqref{eq:fenwick-tree}. Arrows represent what is stored as $y_i$.}
    \label{fig:fenwick-tree}
\end{figure}

\begin{figure}[t]
\begin{algorithm}[H]
    \caption{Generation of $P(l)$}\label{alg:P}
    \begin{algorithmic}
    \Require{$l$} 
    \State $i \gets l$
    \State $P \gets \{l-1\}$
    \While{$i>0$}
        \State $i \gets i - \mathrm{LSB}(i)$
        \State $P \gets P \cup \{i-1\}$
    \EndWhile
    \State \Return $P$
    \end{algorithmic}
\end{algorithm}
\end{figure}

\begin{figure}[t]
\begin{algorithm}[H]
    \caption{Generation of $U(l)$}\label{alg:U}
    \begin{algorithmic}
    \Require{$l$} 
    \State $i \gets l+1$
    \State $U \gets \{l\}$
    \While{$i<N$}
        \State $i \gets i + \mathrm{LSB}(i)$
        \State $U \gets U \cup \{i-1\}$
    \EndWhile
    \State \Return $U$
    \end{algorithmic}
\end{algorithm}
\end{figure}

The Bravyi-Kitaev mapping can be viewed as a fermion-to-qubit correspondence where we associate the fermionic state $(a^\dagger_0)^{x_0}(a^\dagger_1)^{x_1}\cdots (a_{N-1}^\dagger)^{x_{N-1}}\ket{\rm vac}$ to a qubit state $\ket{y_0y_1\dots y_{N-1}}$ using the Fenwick tree $\bm{y}$ of $\bm{x}$. Here, we assume the sum to construct $y_i$ is taken modulo 2.
Since Jordan-Wigner mapping associate $(a^\dagger_1)^{x_0}(a^\dagger_2)^{x_1}\cdots (a_{N-1}^\dagger)^{x_{N-1}}\ket{\rm vac}$ to a qubit state $\ket{x_0x_1\dots x_{N-1}}$, an fermionic operator $B_{\rm fermion}$ that is mapped to a qubit operator $B_{\rm JW}$ in Jordan-Wigner mapping corresponds to a qubit operator $B_{\rm BK}=WB_{\rm JW}W^\dagger$ in Bravyi-Kitaev mapping, where $W$ is a unitary such that $W\ket{x_0 x_1 \cdots x_{N-1}} = \ket{y_0 y_1 \cdots y_{N-1}}$.

Now, we consider how $W$ transforms single-qubit Pauli operators.
First, $WX_lW^\dagger = X_{U(l)}$ because applying $X_l$ to $\ket{x_0 x_1 \cdots x_{N-1}}$ corresponds to adding 1 to the $l$th bit and it is equivalent to adding 1 to every bit that is contained in $U(l)$ from the property of Fenwick tree.
Second, $WZ_lW^\dagger = Z_{P(l+1)}Z_{P(l)}$.
To see this, note that applying $Z_l$ to $\ket{x_0 x_1 \cdots x_{N-1}}$ corresponds to multiplying $(-1)^{x_l}$ to the state. On the other hand, applying $Z_{P(l)}$ to $\ket{y_0 y_1 \cdots y_{N-1}}$ corresponds to $(-1)^{\sum_{i\in P(l)} y_i} = (-1)^{\sum_{i=0}^{l-1} x_i}$.
$Z_{P(l+1)}Z_{P(l)}$, when applied to $\ket{y_0 y_1 \cdots y_{N-1}}$, therefore multiplies $(-1)^{\sum_{i=0}^{l} x_i}(-1)^{\sum_{i=0}^{l-1} x_i}=(-1)^{x_l}$ to the state, and this action is equivalent to $Z_l$ applied to $\ket{x_0 x_1 \cdots x_{N-1}}$.
Finally, $WY_lW^\dagger=iX_{U(l)}Z_{P(l+1)}Z_{P(l)}$ because $Y_l = iX_lZ_l $.
Note that both of $P(l+1)$ and $U(l)$ always contain $l$ and this allows us to write $WY_lW^\dagger= X_{U(l)\setminus \{l\}}Y_lZ_{P(l+1)\setminus\{ l\}}Z_{P(l)}$.

By observing these relations, it is straightforward to see that Eq. \eqref{eq: JW hopping term} holds.
First, $X_{2l}X_{2l+1}$ that appears in Jordan-Wigner mapping corresponds to $X_{U(2l)}X_{U(2l+1)}$.
From Algorithm \ref{alg:U}, $U(2l)\setminus U(2l+1)=\{2l\}$ and therefore $X_{U(2l)}X_{U(2l+1)} = X_{2l}$.
Second, consider $Y_{2l}Y_{2l+1}$ that appears in Jordan-Wigner mapping.
This corresponds to 
\begin{align}\label{eq:tmp}
    X_{U(2l)\setminus \{2l\}}Y_{2l}Z_{P(2l+1)\setminus\{2l\}}Z_{P(2l)} \cdot X_{U(2l+1)\setminus \{2l+1\}}Y_{2l+1}Z_{P(2l+2)\setminus\{2l+1\}}Z_{P(2l+1)}.
\end{align}
Note that $i<l$ holds if $i\in P(l)$ or $i \in P(l+1)\setminus \{l\}$ for any $l$.
Also, $i> l$ holds if $i\in U(l)\setminus \{l\}$.
This allows us to simplify Eq.~\eqref{eq:tmp} as
\begin{align}
    &X_{U(2l)\setminus \{2l\}}X_{U(2l+1)\setminus \{2l+1\}} \cdot Y_{2l}Y_{2l+1} \cdot Z_{P(2l+1)\setminus\{2l\}}Z_{P(2l)}Z_{P(2l+2)\setminus\{2l+1\}}Z_{P(2l+1)} \\
    &=X_{2l+1} \cdot Y_{2l}Y_{2l+1} \cdot Z_{2l}Z_{P(2l+2)\setminus\{2l+1\}}Z_{P(2l)} \\
    &= -Z_{2l+1}X_{2l}Z_{P(2l+2)\setminus\{2l+1\}}Z_{P(2l)}.
\end{align}
Note that we used $[U(2l)\setminus \{2l\}]\setminus[ U(2l+1)\setminus \{2l+1\}]=\{2l+1\}$, $P(2l+1) \setminus [P(2l+1)\setminus\{2l\}] = \{2l\}$ for the first equality, and the second equality follows from $X_{2l+1}Y_{2l+1}=iZ_{2l+1}$ and $Y_{2l}Z_{2l}=iX_{2l}$.

\section{Details of numerical calculation \label{appsec: numerics}}
We describe the details of the numerical calculation in Sec.~\ref{sec: numerics}.

\subsection{Details of QWC and GC grouping}\label{appsec:grouping-details}
Let us consider the qubit Hamiltonian $H_q = \sum_{i=1}^L c_i P_i$, where $P_i$ is $2N$-qubit Pauli operators and $c_i$ is its coefficient.
We assume that $P_i$ is not the identity operator.
In GC grouping with the ``sorted insertion" algorithm~\cite{crawford2021efficient}, the Pauli operators $\{ P_i \}_{i=1}^L$ are sorted by descending order of the absolute values of their coefficients $|c_i|$.
We denote the sorted Pauli operators $\{P'_i\}_{i=1}^L$.
We make the first group and include $P'_1$ in that group.
Then, we iterate the Pauli operators $P'_2, P'_3, \cdots$ and include it if it commutes with all terms in the group.
The creation of the first group ends when we reach $P'_L$, i.e., all of the remaining Pauli operators are checked.
The second group is made by including $P'_{k}$, where $k$ is the smallest number among the remaining Pauli operators that are not included in the first group.
We grow the second group by checking all of the remaining Pauli operators and picking them up if they commute with all terms in the group.
We repeat the creation of the groups in the same way until all Pauli operators are included in some group.

In QWC grouping with the sorted insertion algorithm, two $2N$-qubit Pauli operators $P = P^{(1)} \otimes \cdots \otimes P^{(2N)}$ and $Q = Q^{(1)} \otimes \cdots \otimes Q^{(2N)}$, where $P^{(i)}$ and $Q^{(i)}$ are single-qubit Pauli operators $I,X,Y,Z$ acting the site $i$, are said qubit-wise commuting if and only if $[P^{(i)}, Q^{(i)}] = 0$ holds for all $i=1,\cdots,2N$.
QWC grouping is performed in the same way as GC grouping, but we include a Pauli operator in some group when it satisfies qubit-wise commuting with all operators in the group.

\subsection{Calculation of the number of shots}\label{appsec: number of shots}
We omit the constant term in the qubit Hamiltonian and safely assume that $P_i$ is not the identity operator.
Suppose that we group the Hamiltonian as 
\begin{equation*}
 H_q = \sum_{g=1}^{G} O_g,  \:\: O_g = \sum_{j=1}^{L_g} c_{j}^{(g)} P_j^{(g)},
\end{equation*}
where $O_g$ is the $g$-th group, $c_j^{(g)} \: (P_j^{(g)})$ is the $j$-th coefficient (Pauli operator) of the $g$-th group ($j=1,\cdots,L_g)$, and $G$ is the total number of groups.
The Pauli operators in each group mutually commute so that we can simultaneously measure all operators in each group: $[P_j^{(g)}, P_{j'}^{(g)}] = 0$ for all $j, j'$.
When the total number of shots is $s$, the smallest standard deviation of the energy expectation value, $\ev{H_q}{\psi}$, for a quantum state $\ket{\psi}$ by the optimal shot allocation for the groups is known as~\cite{Rubin_2018,crawford2021efficient}
\begin{equation*}
 \sigma(s) = \frac{\gamma}{\sqrt{s}}, \:\: \gamma = \sqrt{ \sum_g \sqrt{ \sum_{j,j'} c_j^{(g)} c_{j'}^{(g)} \left(\ev{P_j^{(g)} P_{j'}^{(g)}}{\psi} - \ev{P_j^{(g)}}{\psi} \ev{P_{j'}^{(g)}}{\psi} \right) } }.
\end{equation*}
In the numerical calculations, we estimate the standard deviation of the energy by replacing the exportation values in this expression by Haar random average~\cite{gonthier2020identifying} ($\ev{P} = 0$ for any non-identity Pauli operator $P$):
\begin{equation*}
 \sigma'(s) = \frac{\gamma'}{\sqrt{s}}, \:\: \gamma' = \sqrt{ \sum_g \sqrt{ \sum_{j} \left(c_j^{(g)}\right)^2  } },
\end{equation*}
because the exact ground state of the Hamiltonian $\ket{\psi_\mr{GS}}$ is not available for the systems as large as $N \gtrsim 16$. 
The number of shots required to realize the standard deviation of $10^{-3}$ Hartree is calculated from this equation, i.e., $s_\mr{est} = (\gamma'/(10^{-3} \text{ Hartree}))^2$.

\bibliography{bib.bib}

\end{document}